\documentclass[journal]{IEEEtran}

\usepackage[T1]{fontenc}
\usepackage[utf8]{inputenc}
\usepackage{cite}
\usepackage{graphicx}
\usepackage{amsmath,amssymb,mathtools}
\usepackage{algorithm}
\usepackage{algpseudocode}
\usepackage{booktabs}
\usepackage{bm}

\newtheorem{proposition}{Proposition}

\newtheorem{remark}{Remark}
\newtheorem{assumption}{Assumption}

\begin{document}
	
\title{Pilot-Projected Nonlinear Self-Interference Estimation with Simultaneous Uplink Training}
	
	\author{Seyed Mohammad Azimi-Abarghouyi,~\IEEEmembership{Member,~IEEE}, and Thomas Eriksson,~\IEEEmembership{Senior Member,~IEEE}
		\thanks{The authors are with the Department of Electrical Engineering, Chalmers University of Technology, Gothenburg, Sweden (Emails: $\left\{\text{azimimo}, \text{thomase}\right\}$@chalmers.se). This work was supported by the Swedish Innovation Agency (Vinnova) Grant 2024-02531, MULTIRACS, through the Eureka CELTIC-NEXT Framework.}
	}
	
	\maketitle
	
\begin{abstract}
	This paper considers nonlinear digital self-interference (SI)
cancellation and uplink-channel estimation for single-antenna in-band
full-duplex radios when a desired uplink pilot is present during SI
training. In this setting, a rich nonlinear SI dictionary can absorb
part of the structured pilot response, while its large and correlated
set of basis functions can make reliable SI estimation difficult,
particularly with limited training. During SI-coefficient estimation,
we treat the unknown uplink channel as a nuisance parameter and
eliminate it to obtain an equivalent reduced SI estimation problem
on the orthogonal complement of the pilot convolution subspace. Since the projected nonlinear dictionary can remain ill conditioned, we develop
two complementary full-dictionary estimators: a pilot-projected ridge estimator
with Bayesian-information-criterion tuning (PP-RidgeBIC), which uses global
ridge shrinkage, and a pilot-projected group-sparse Bayesian estimator
(PP-GSBE), which learns branch-dependent Bayesian shrinkage from the data. These estimators offer two complementary operating points: PP-RidgeBIC favors computational efficiency, whereas PP-GSBE provides greater adaptivity to heterogeneous nonlinear SI components. Simulations show substantial gains in SI cancellation and uplink reception over non-projected and hard-selection alternatives; notably, PP-GSBE suppresses the residual SI below the noise floor with moderate training and consistently provides the strongest performance under nonlinear model mismatch.
\end{abstract}
	
\begin{IEEEkeywords}
	Full-duplex radios, self-interference cancellation, nonlinear distortion,
	widely-linear modeling, pilot projection, ridge regression, group-sparse
	Bayesian estimation, sparse Bayesian learning.
\end{IEEEkeywords}

\section{Introduction}

In-band full-duplex communication enables simultaneous transmission and
reception over the same time--frequency resource, and is therefore a promising
approach for improving spectral efficiency compared with conventional
half-duplex operation~\cite{Sabharwal2014,Zhang2016}. Its practical deployment,
however, is limited by self-interference (SI), where the signal transmitted by
the full-duplex node couples back into its own receiver. Since this coupled
signal can be much stronger than the desired received signal, full-duplex
receivers combine passive isolation, radio-frequency (RF)/analog cancellation,
and digital cancellation to suppress SI to a level compatible with reliable
detection~\cite{Choi2010,Jain2011,Bharadia2013,Duarte2012,Everett2014,
	Korpi2016,AhmedEltawil2015}. The last stage, digital cancellation, is an estimation
problem: the receiver uses its known transmitted samples to reconstruct the
residual SI at baseband and subtract it from the received signal.

Accurate digital SI cancellation is challenging because the residual SI is not
generally a linear filtered copy of the transmitted waveform. Hardware
impairments and frequency-selective coupling, including power-amplifier
nonlinearities, transmitter and receiver in-phase/quadrature (I/Q) imbalance,
oscillator phase noise,
quantization effects, and analog-canceller mismatch, can produce nonlinear,
memory-dependent, and conjugate-dependent SI components~\cite{Anttila2013,
	Ahmed2013,Korpi2014,Anttila2014,Korpi2017}. For this reason, nonlinear SI
cancellers are commonly built from behavioral baseband dictionaries generated
from the known transmit signal. These dictionaries may include delayed transmit
samples, polynomial and memory-polynomial terms, Volterra-type or
parallel-Hammerstein components, generalized memory-polynomial basis functions,
and widely-linear conjugate terms for image-related distortion~\cite{Morgan2006,
	Ding2004,Korpi2014}. While such models are expressive, they can also be large,
highly correlated, and ill conditioned, especially when high nonlinear orders,
long memory lengths, and both direct and conjugate branches are included.

Existing nonlinear SI cancellation methods mainly address this modeling and
estimation difficulty through adaptation, nonlinear representation, basis
conditioning, or model reduction. Augmented nonlinear least-mean-square (LMS)
filters account for nonlinear and conjugate-dependent SI components in adaptive
form~\cite{Li2018}. Neural-network-based cancellers learn nonlinear SI mappings
directly from data~\cite{Balatsoukas2018,Kurzo2018,Kristensen2019}.
Complexity-aware implementations reduce the cost of nonlinear digital
cancellation~\cite{Kristensen2020}. Orthonormalized polynomial bases improve
numerical conditioning by reducing correlations among nonlinear
regressors~\cite{Lee2025}. Another important line of work uses greedy model
selection to reduce a large nonlinear dictionary to a smaller active support
\cite{Shahghasi2025}. Such hard-selection methods can be coupled with
information criteria such as the Bayesian information criterion (BIC), where
BIC is used in its standard role: it compares discrete candidate supports by
balancing residual fitting error against the number of selected basis functions,
and therefore acts as a stopping or support-size selection rule~\cite{Schwarz1978}. These approaches improve important aspects
of nonlinear SI cancellation, but they usually treat the training observation as
a regression of SI in additive noise. They do not explicitly account for the
case where a desired uplink pilot is simultaneously present during SI training.

This paper studies this pilot-contaminated SI training regime. During the
training interval, the received baseband signal contains the residual SI driven
by the known local transmit waveform and the desired uplink pilot filtered by an
unknown uplink channel. The pilot sequence is known, but the channel coefficients
are not. Hence, the desired pilot response belongs to a known convolution
subspace with unknown coordinates. If the nonlinear SI dictionary is fitted
directly to the received mixture, a sufficiently rich dictionary can absorb part
of this structured pilot response into the estimated SI coefficients. This
causes two coupled errors: the learned SI model becomes biased, and the
subsequent uplink-channel estimate is degraded after imperfect SI subtraction.

Motivated by this observation, we develop a pilot-projected framework that uses a single simultaneous-training block for nonlinear SI estimation and uplink-channel estimation. Rather than treating the desired pilot contribution as unstructured interference during SI training, the framework explicitly accounts for its known structure and separates it from the nonlinear SI estimation problem. This enables reliable SI cancellation and subsequent channel recovery without requiring a dedicated SI-only training interval.

The main contributions of this paper are as follows.
\begin{itemize}
	
	\item We formulate nonlinear SI training in the presence of a simultaneous desired uplink pilot as a structured nuisance-parameter problem. By treating the unknown uplink-channel coefficients as unregularized nuisance parameters, we derive an equivalent reduced regression for SI estimation through projection onto the orthogonal complement of the known pilot convolution subspace.
	
	\item We employ a normalized widely-linear memory-polynomial dictionary whose
	direct and conjugate nonlinear branches define the coefficient groups, and
	develop two complementary pilot-projected full-dictionary estimators. The
	BIC-tuned pilot-projected ridge estimator (PP-RidgeBIC) uses a ridge-adapted
	Bayesian information criterion, based on the effective degrees of freedom, to
	tune a single global regularization parameter without hard basis selection,
	whereas the pilot-projected group-sparse Bayesian estimator (PP-GSBE) learns
	one variance hyperparameter per nonlinear branch to provide adaptive
	branch-wise shrinkage. The complexity analysis shows that PP-RidgeBIC provides
	the lower-complexity solution, while PP-GSBE requires additional iterative
	computation in exchange for greater adaptivity.
	
	\item We characterize the analytical consequences of pilot projection and residual SI estimation error and evaluate the proposed framework across the complete receiver chain, including training-phase SI estimation and cancellation, uplink-channel estimation, data-phase SI reconstruction, and data detection. Analytically, we quantify the fraction of each nonlinear SI basis waveform retained after pilot projection, thereby characterizing its post-projection identifiability, and derive how residual SI reconstruction error propagates through the pilot matched-filtering operation into the subsequent uplink-channel estimate. Numerically, we compare the proposed methods against non-projected regularized estimation, pilot-projected greedy model selection, and pilot-projected orthonormalized adaptive nonlinear cancellation using SI cancellation gain, residual interference-to-noise ratio (INR), channel normalized mean-square error (NMSE), bit-error rate (BER), and error-vector magnitude (EVM). The results show that PP-GSBE achieves the highest SI cancellation gain, suppresses the residual SI below the noise floor at moderate training lengths, and provides the lowest channel NMSE, BER, and EVM over the considered training range. It also provides the strongest residual-SI suppression under variations in the pre-digital residual SI level, power-amplifier nonlinearity, and I/Q image leakage. PP-RidgeBIC, meanwhile, follows PP-GSBE well across most operating conditions and performance metrics.
	
\end{itemize}

\subsubsection*{Notation}
	
	Scalars are denoted by lower-case letters, vectors by bold lower-case letters,
	and matrices by bold upper-case letters. The operators $(\cdot)^T$,
	$(\cdot)^H$, and $(\cdot)^*$ denote transpose, Hermitian transpose, and complex
	conjugation, respectively. The Euclidean norm is denoted by $\|\cdot\|_2$.
	For a matrix $\mathbf{A}$, $\mathrm{rank}(\mathbf{A})$ and
	$\mathrm{tr}(\mathbf{A})$ denote its rank and trace. The identity matrix is
	denoted by $\mathbf{I}$.
	
	A circularly symmetric complex Gaussian vector with mean $\bm{\mu}$ and
	covariance matrix $\mathbf{C}$ is denoted by
	\begin{equation}
		\mathbf{x}\sim\mathcal{CN}(\bm{\mu},\mathbf{C}).
	\end{equation}
	The local transmit signal generated by the full-duplex node is denoted by $x[n]$, and the
	uplink pilot transmitted by the remote user is denoted by $t[n]$. Since $x[n]$
	is generated locally, it is assumed to be known at the full-duplex receiver. Therefore,
	all nonlinear and conjugate basis functions generated from $x[n]$ are available
	for digital cancellation.

\section{System Model}
\label{sec:system_model}
	
	\subsection{Received Training Signal}
	
	Consider a single-antenna in-band full-duplex node that transmits and receives
	simultaneously over the same frequency band. During a training interval, the
	received baseband signal is
	\begin{equation}
		y[n]=y_u[n]+y_{\mathrm{SI}}[n]+w[n],
		\label{eq:received_signal}
	\end{equation}
	where $y_u[n]$ is the desired uplink signal, $y_{\mathrm{SI}}[n]$ is the
	residual SI after passive suppression and analog cancellation, and
	$w[n]\sim\mathcal{CN}(0,\sigma^2)$ is receiver noise.
	
	The desired uplink component is modeled as an $L_u$-tap channel:
	\begin{equation}
		y_u[n]
		=
		\sum_{\ell=0}^{L_u-1}h_u[\ell]\,t[n-\ell],
		\label{eq:uplink_signal}
	\end{equation}
	where $h_u[\ell]$ denotes the $\ell$th tap of the uplink channel impulse response.
	
	Stacking $N$ received training samples gives
	\begin{equation}
		\mathbf{y}
		=
		\mathbf{S}\mathbf{h}_u
		+
		\mathbf{y}_{\mathrm{SI}}
		+
		\mathbf{w},
		\label{eq:stacked_received}
	\end{equation}
where $\mathbf{y}\in\mathbb{C}^{N}$ is the received vector,
$\mathbf{h}_u\in\mathbb{C}^{L_u}$ is the uplink channel vector,
$\mathbf{S}\in\mathbb{C}^{N\times L_u}$ is the pilot convolution matrix,
and $\mathbf{y}_{\mathrm{SI}},\mathbf{w}\in\mathbb{C}^{N}$ are the stacked
residual-SI and receiver-noise vectors, respectively.
	
For $i=0,\ldots,N-1$, let $n_i=n_0+i$, where $n_0$ is the first
training-sample index. The $(i+1)$th row of $\mathbf{S}$ is
	\begin{equation}
		\mathbf{s}^{T}[n_i]
		=
		\begin{bmatrix}
			t[n_i] & t[n_i-1] & \cdots & t[n_i-L_u+1]
		\end{bmatrix},
	\end{equation}
	so that
	\begin{equation}
		\mathbf{S}
		=
		\begin{bmatrix}
			\mathbf{s}^{T}[n_0]\\
			\mathbf{s}^{T}[n_0+1]\\
			\vdots\\
			\mathbf{s}^{T}[n_0+N-1]
		\end{bmatrix}.
	\end{equation}
	
	\subsection{Widely-Linear Nonlinear SI Dictionary}
	\label{sec:wl_dictionary}
	
	The residual SI is driven by the known local transmit signal $x[n]$, but the
	received SI may include nonlinear and conjugate-dependent components. Let $\mathcal{P}=\{p_1,\ldots,p_{|\mathcal{P}|}\}$ denote the set of
	considered odd nonlinear orders, indexed in ascending order. For each
	$p\in\mathcal{P}$, define the direct nonlinear basis sequence
	\begin{equation}
		\phi_p^{(d)}[n]
		=
		x[n]|x[n]|^{p-1},
		\label{eq:direct_basis}
	\end{equation}
	and the conjugate nonlinear basis sequence
	\begin{equation}
		\phi_p^{(c)}[n]
		=
		x^*[n]|x[n]|^{p-1}.
		\label{eq:conjugate_basis}
	\end{equation}
	The direct terms model nonlinear distortion driven by the transmit waveform,
	whereas the conjugate terms model image-related widely-linear effects caused,
	for example, by I/Q imbalance.
	
	With memory, the effective behavioral SI model is
	\begin{equation}
		\begin{aligned}
			y_{\mathrm{SI}}[n]
			&\approx
			\sum_{p\in\mathcal{P}}
			\sum_{\ell=0}^{L_p^{(d)}-1}
			a_{p,\ell}^{(d)}
			\phi_p^{(d)}[n-\ell]
			\\
			&\quad+
			\sum_{p\in\mathcal{P}}
			\sum_{\ell=0}^{L_p^{(c)}-1}
			a_{p,\ell}^{(c)}
			\phi_p^{(c)}[n-\ell].
		\end{aligned}
		\label{eq:wl_mp_model}
	\end{equation}
	
	Here, $L_p^{(d)}$ and $L_p^{(c)}$ denote the memory lengths of the direct
	and conjugate branches, respectively, while $a_{p,\ell}^{(d)}$ and
	$a_{p,\ell}^{(c)}$ denote the corresponding SI coefficients.
	
	Define
	\begin{equation}
		\bm{\phi}_{p}^{(d)}[n]
		=
		\begin{bmatrix}
			\phi_p^{(d)}[n] &
			\phi_p^{(d)}[n-1] &
			\cdots &
			\phi_p^{(d)}[n-L_p^{(d)}+1]
		\end{bmatrix}^{T},
	\end{equation}
	and
	\begin{equation}
		\bm{\phi}_{p}^{(c)}[n]
		=
		\begin{bmatrix}
			\phi_p^{(c)}[n] &
			\phi_p^{(c)}[n-1] &
			\cdots &
			\phi_p^{(c)}[n-L_p^{(c)}+1]
		\end{bmatrix}^{T}.
	\end{equation}
	The full nonlinear feature vector is
	\begin{equation}
		\mathbf{g}[n]
		=
		\begin{bmatrix}
			\bm{\phi}_{p_1}^{(d)T}[n] &
			\bm{\phi}_{p_1}^{(c)T}[n] &
			\cdots &
			\bm{\phi}_{p_{|\mathcal{P}|}}^{(d)T}[n] &
			\bm{\phi}_{p_{|\mathcal{P}|}}^{(c)T}[n]
		\end{bmatrix}^{T}.
		\label{eq:feature_vector}
	\end{equation}
	Stacking the features over the training block gives
	\begin{equation}
		\mathbf{G}
		=
		\begin{bmatrix}
			\mathbf{g}^{T}[n_0]\\
			\mathbf{g}^{T}[n_0+1]\\
			\vdots\\
			\mathbf{g}^{T}[n_0+N-1]
		\end{bmatrix}
		\in\mathbb{C}^{N\times D},
		\label{eq:raw_dictionary}
	\end{equation}
	where
	\begin{equation}
		D
		=
		\sum_{p\in\mathcal{P}}
		\left(
		L_p^{(d)}+L_p^{(c)}
		\right)
	\end{equation}
	is the total number of direct and conjugate SI coefficients. This
	ordering also defines the branch-level groups used by the proposed
	group-sparse estimator: for each nonlinear order $p$, the direct-memory
coefficients and the conjugate-memory coefficients are treated as two separate
coefficient groups.
	
Let
$\mathbf{a}_{p}^{(d)}
=
[a_{p,0}^{(d)},\ldots,a_{p,L_p^{(d)}-1}^{(d)}]^T$
and
$\mathbf{a}_{p}^{(c)}
=
[a_{p,0}^{(c)},\ldots,a_{p,L_p^{(c)}-1}^{(c)}]^T$.
Then
\begin{equation}
	\mathbf{a}
	=
	\begin{bmatrix}
		\mathbf{a}_{p_1}^{(d)T} &
		\mathbf{a}_{p_1}^{(c)T} &
		\cdots &
		\mathbf{a}_{p_{|\mathcal{P}|}}^{(d)T} &
		\mathbf{a}_{p_{|\mathcal{P}|}}^{(c)T}
	\end{bmatrix}^{T}
	\in\mathbb{C}^{D}
\end{equation}
collects all unknown SI coefficients. The training model becomes
\begin{equation}
	\mathbf{y}
	=
	\mathbf{S}\mathbf{h}_u
	+
	\mathbf{G}\mathbf{a}
	+
	\mathbf{e},
	\label{eq:joint_training_model}
\end{equation}
where $\mathbf{e}\in\mathbb{C}^{N}$ is an effective disturbance vector that
collects the receiver-noise vector $\mathbf{w}$ and the residual SI modeling
error associated with the behavioral approximation in
\eqref{eq:wl_mp_model}. Thus, the objective is to estimate the two unknown
parameter vectors $\mathbf{h}_u$ and $\mathbf{a}$ in
\eqref{eq:joint_training_model}.
	
	\section{Pilot-Projected Estimation Model}
	\label{sec:pilot_projected_model}
	
	\subsection{Problem Motivation}
	
	We first focus on estimating $\mathbf{a}$ in
\eqref{eq:joint_training_model}. If $\mathbf{a}$ is estimated directly from
$\mathbf{y}$ using a large nonlinear dictionary, the dictionary may partially
explain the desired pilot. To prevent this, the contribution associated with
$\mathbf{h}_u$ is removed from the SI-estimation equation before the SI
coefficients are estimated.
	
	\begin{assumption}
		The pilot convolution matrix $\mathbf{S}$ has full column rank, i.e.,
		$\mathrm{rank}(\mathbf{S})=L_u$.
	\end{assumption}
	
	This standard condition is typically satisfied in practice and implies that
$\mathbf{S}^{H}\mathbf{S}$ is nonsingular.
	
\subsection{Pilot-Subspace Projection}

Define the orthogonal projector $\mathbf{P}_{\perp}$ onto the complement of the pilot subspace:
\begin{equation}
	\mathbf{P}_{\perp}
	=
	\mathbf{I}
	-
	\mathbf{S}(\mathbf{S}^{H}\mathbf{S})^{-1}\mathbf{S}^{H}.
	\label{eq:projection_matrix}
\end{equation}
Under the full-column-rank assumption on $\mathbf{S}$, $\mathbf{P}_{\perp}$ is
Hermitian and idempotent, and satisfies
$\mathbf{P}_{\perp}\mathbf{S}=\mathbf{0}$. Therefore, applying
$\mathbf{P}_{\perp}$ to \eqref{eq:joint_training_model} gives
\begin{equation}
	\mathbf{P}_{\perp}\mathbf{y}
	=
	\mathbf{P}_{\perp}\mathbf{G}\mathbf{a}
	+
	\mathbf{P}_{\perp}\mathbf{e}.
	\label{eq:projected_model_raw}
\end{equation}
Thus, the desired pilot component is removed before SI coefficient estimation.

The projected model can be represented either in the original $N$-dimensional
space or in a reduced-dimensional projected space. Let
$\mathbf{B}_{\perp}\in\mathbb{C}^{N\times N_p}$ contain an orthonormal basis for
the range of $\mathbf{P}_{\perp}$, i.e.,
\begin{equation}
	\mathbf{B}_{\perp}^{H}\mathbf{B}_{\perp}
	=
	\mathbf{I}_{N_p},
	\qquad
	\mathbf{B}_{\perp}\mathbf{B}_{\perp}^{H}
	=
	\mathbf{P}_{\perp},
\end{equation}
where
\begin{equation}
	N_p=\mathrm{rank}(\mathbf{P}_{\perp})=N-L_u.
	\label{eq:Np_def}
\end{equation}
The compact projected observation $\mathbf{y}_{p}$ and dictionary $\mathbf{A}_{0}$ are then
\begin{equation}
	\mathbf{y}_{p}
	=
	\mathbf{B}_{\perp}^{H}\mathbf{y},
	\qquad
	\mathbf{A}_{0}
	=
	\mathbf{B}_{\perp}^{H}\mathbf{G}.
\end{equation}
Hence,
\begin{equation}
	\mathbf{y}_{p}
	=
	\mathbf{A}_{0}\mathbf{a}
	+
	\mathbf{n}_{p},
	\qquad
	\mathbf{n}_{p}
	=
	\mathbf{B}_{\perp}^{H}\mathbf{e}.
	\label{eq:compact_projected_model}
\end{equation}

\begin{proposition}[Elimination of the uplink-channel nuisance parameter]
	\label{prop:nuisance_elimination}
	Let $\mathcal{R}(\mathbf{a})$ be any regularization term that depends on the SI
	coefficients but not on the uplink channel. Under the full-column-rank
	assumption on $\mathbf{S}$, the joint problem
\begin{equation}
\min_{\mathbf{h}_u,\mathbf{a}}
\left\|
\mathbf{y}
-
\mathbf{S}\mathbf{h}_u
-
\mathbf{G}\mathbf{a}
\right\|_2^2
+
\mathcal{R}(\mathbf{a})
\label{eq:joint_nuisance_problem}
\end{equation}
	has the same set of minimizers with respect to the SI-coefficient vector
$\mathbf{a}$ as the reduced problem
	\begin{equation}
		\min_{\mathbf{a}}
		\left\|
		\mathbf{B}_{\perp}^{H}\mathbf{y}
		-
		\mathbf{B}_{\perp}^{H}\mathbf{G}\mathbf{a}
		\right\|_2^2
		+
		\mathcal{R}(\mathbf{a}).
		\label{eq:reduced_nuisance_problem}
	\end{equation}
\end{proposition}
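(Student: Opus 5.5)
We would prove the claim by profiling out the nuisance parameter $\mathbf{h}_u$ and showing that the profiled objective equals the reduced objective for every $\mathbf{a}$.

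\textbf{Step 1 (inner minimization over $\mathbf{h}_u$).} Fix $\mathbf{a}$ and write $\mathbf{r}(\mathbf{a})=\mathbf{y}-\mathbf{G}\mathbf{a}$. The term $\mathcal{R}(\mathbf{a})$ does not depend on $\mathbf{h}_u$, so the inner problem is the ordinary least-squares problem $\min_{\mathbf{h}_u}\|\mathbf{r}(\mathbf{a})-\mathbf{S}\mathbf{h}_u\|_2^2$. By the full-column-rank assumption on $\mathbf{S}$, this problem has the unique minimizer
\begin{equation*}
\hat{\mathbf{h}}_u(\mathbf{a})=(\mathbf{S}^H\mathbf{S})^{-1}\mathbf{S}^H\mathbf{r}(\mathbf{a}).
\end{equation*}
The corresponding minimum value is $\|\mathbf{P}_\perp\mathbf{r}(\mathbf{a})\|_2^2$. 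To see this, split $\mathbf{r}=\mathbf{P}_\perp\mathbf{r}+(\mathbf{I}-\mathbf{P}_\perp)\mathbf{r}$. We then have $\mathbf{r}-\mathbf{S}\mathbf{h}_u=\mathbf{P}_\perp\mathbf{r}+\bigl((\mathbf{I}-\mathbf{P}_\perp)\mathbf{r}-\mathbf{S}\mathbf{h}_u\bigr)$. The second summand lies in the range of $\mathbf{S}$, which is orthogonal to the range of $\mathbf{P}_\perp$, since $\mathbf{P}_\perp$ is Hermitian and $\mathbf{P}_\perp\mathbf{S}=\mathbf{0}$. Pythagoras therefore gives
\begin{equation*}
\|\mathbf{r}-\mathbf{S}\mathbf{h}_u\|_2^2=\|\mathbf{P}_\perp\mathbf{r}\|_2^2+\|(\mathbf{I}-\mathbf{P}_\perp)\mathbf{r}-\mathbf{S}\mathbf{h}_u\|_2^2 .
\end{equation*}
The second term can be made zero, and only by $\hat{\mathbf{h}}_u(\mathbf{a})$.

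\textbf{Step 2 (rewriting the profiled objective).} Using $\mathbf{P}_\perp=\mathbf{B}_\perp\mathbf{B}_\perp^H$ and $\mathbf{B}_\perp^H\mathbf{B}_\perp=\mathbf{I}_{N_p}$, we get
\begin{equation*}
\|\mathbf{P}_\perp\mathbf{r}\|_2^2=\mathbf{r}^H\mathbf{B}_\perp\mathbf{B}_\perp^H\mathbf{B}_\perp\mathbf{B}_\perp^H\mathbf{r}=\|\mathbf{B}_\perp^H\mathbf{r}\|_2^2 .
\end{equation*}
Define the joint objective $J(\mathbf{h}_u,\mathbf{a})$ and the reduced objective $F(\mathbf{a})$. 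Then
\begin{equation*}
\min_{\mathbf{h}_u}J(\mathbf{h}_u,\mathbf{a})=F(\mathbf{a})\quad\text{for every }\mathbf{a},
\end{equation*}
and the inner minimum is always attained.

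\textbf{Step 3 (equality of the minimizer sets).} This is the step where care is needed. The issue is not the algebra but the fact that $\mathcal{R}$ is arbitrary (possibly nonconvex, or with minima that are not attained). We therefore argue directly with sets rather than invoking convexity.

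First, suppose $(\mathbf{h}_u^\star,\mathbf{a}^\star)$ minimizes $J$. For any $\mathbf{a}$,
\begin{equation*}
F(\mathbf{a}^\star)\le J(\mathbf{h}_u^\star,\mathbf{a}^\star)\le J(\hat{\mathbf{h}}_u(\mathbf{a}),\mathbf{a})=F(\mathbf{a}),
\end{equation*}
so $\mathbf{a}^\star$ minimizes $F$.

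Conversely, suppose $\mathbf{a}^\star$ minimizes $F$. For any pair $(\mathbf{h}_u,\mathbf{a})$,
\begin{equation*}
J(\hat{\mathbf{h}}_u(\mathbf{a}^\star),\mathbf{a}^\star)=F(\mathbf{a}^\star)\le F(\mathbf{a})\le J(\mathbf{h}_u,\mathbf{a}),
\end{equation*}
so $(\hat{\mathbf{h}}_u(\mathbf{a}^\star),\mathbf{a}^\star)$ minimizes $J$.

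Hence the projections onto $\mathbf{a}$ of the two minimizer sets coincide, including the case where both are empty, because the two optimal values are equal. As a byproduct, every joint minimizer has $\mathbf{h}_u^\star=\hat{\mathbf{h}}_u(\mathbf{a}^\star)$, by the uniqueness established in Step 1.
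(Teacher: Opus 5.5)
Your proposal is correct and follows the same route as the paper: you profile out $\mathbf{h}_u$ by least squares, note that the residual becomes $\mathbf{P}_{\perp}(\mathbf{y}-\mathbf{G}\mathbf{a})$, and use $\|\mathbf{P}_{\perp}\mathbf{z}\|_2^2=\|\mathbf{B}_{\perp}^{H}\mathbf{z}\|_2^2$. Your Step~3 also spells out the two-way inclusion of the minimizer sets, which holds for an arbitrary, possibly nonconvex, $\mathcal{R}$; the paper leaves this implicit in its phrase ``substituting gives the reduced problem.''
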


\begin{IEEEproof}
	For fixed $\mathbf{a}$, the least-squares minimizer of
\eqref{eq:joint_nuisance_problem} with respect to $\mathbf{h}_u$ is
\begin{equation}
\hat{\mathbf{h}}_u(\mathbf{a})
=
(\mathbf{S}^{H}\mathbf{S})^{-1}
\mathbf{S}^{H}
\left(
\mathbf{y}-\mathbf{G}\mathbf{a}
\right).
\end{equation}
	The corresponding residual is
	\begin{equation}
		\mathbf{y}
		-
		\mathbf{S}\hat{\mathbf{h}}_u(\mathbf{a})
		-
		\mathbf{G}\mathbf{a}
		=
		\mathbf{P}_{\perp}
		\left(
		\mathbf{y}-\mathbf{G}\mathbf{a}
		\right).
	\end{equation}
	Since
	$\|\mathbf{P}_{\perp}\mathbf{z}\|_2^2
	=
	\|\mathbf{B}_{\perp}^{H}\mathbf{z}\|_2^2$
	for any vector $\mathbf{z}$, substituting
	$\hat{\mathbf{h}}_u(\mathbf{a})$ into
\eqref{eq:joint_nuisance_problem}
	gives \eqref{eq:reduced_nuisance_problem}.
\end{IEEEproof}

Thus, pilot projection is the reduced form of
\eqref{eq:joint_nuisance_problem} after $\mathbf{h}_u$ has been profiled out
for SI-coefficient estimation. The subsequent estimators therefore operate only
on the SI component that remains observable outside the pilot subspace.

For the likelihood-based derivations, we use the circular Gaussian working model
$\mathbf{e}\sim\mathcal{CN}(\mathbf{0},\sigma_e^2\mathbf{I})$, where
$\sigma_e^2$ is an effective disturbance variance that may include thermal noise
and residual modeling error and therefore need not equal the receiver-noise
variance $\sigma^2$ in \eqref{eq:received_signal}.

\begin{remark}
	The compact representation avoids the singular covariance that would arise if
	one treated $\mathbf{P}_{\perp}\mathbf{e}$ as an $N$-dimensional Gaussian
	vector. In implementation, one may work directly with $\mathbf{P}_{\perp}$,
	since
	$\|\mathbf{P}_{\perp}\mathbf{z}\|_2^2
	=
	\|\mathbf{B}_{\perp}^{H}\mathbf{z}\|_2^2$.
\end{remark}

\subsection{Column Normalization}

The columns of $\mathbf{G}$ can have substantially different powers, especially
across nonlinear orders. Without normalization, regularization would depend on
the arbitrary scale of each dictionary column. Let $\mathbf{g}_j$ denote the
$j$th column of $\mathbf{G}$ and define
\begin{equation}
	c_j=\|\mathbf{g}_j\|_2.
\end{equation}
Let
\begin{equation}
	\mathbf{D}_{G}=\mathrm{diag}(c_1,\ldots,c_D).
\end{equation}
The normalized dictionary is
\begin{equation}
	\widetilde{\mathbf{G}}
	=
	\mathbf{G}\mathbf{D}_{G}^{-1}.
\end{equation}
The normalized coefficient vector $\bm{\beta}$ is defined through
\begin{equation}
\mathbf{G}\mathbf{a}
=
\widetilde{\mathbf{G}}\bm{\beta},
\qquad
\mathbf{a}
=
\mathbf{D}_{G}^{-1}\bm{\beta}.
\label{eq:normalized_coefficients}
\end{equation}
The compact projected normalized dictionary is
\begin{equation}
	\mathbf{A}
	=
	\mathbf{B}_{\perp}^{H}\widetilde{\mathbf{G}}
	=
	\mathbf{A}_{0}\mathbf{D}_{G}^{-1}.
	\label{eq:projected_normalized_dictionary}
\end{equation}
Therefore, the final projected normalized estimation model is
\begin{equation}
	\mathbf{y}_{p}
	=
	\mathbf{A}\bm{\beta}
	+
	\mathbf{n}_{p},
	\qquad
	\mathbf{n}_{p}\sim\mathcal{CN}(\mathbf{0},\sigma_e^2\mathbf{I}_{N_p}).
	\label{eq:projected_model}
\end{equation}
In the sequel, unless otherwise stated, the projected dictionary refers to the
normalized compact dictionary $\mathbf{A}$ in
\eqref{eq:projected_normalized_dictionary}.
	
\begin{remark}
Column normalization does not change the SI subspace spanned by the dictionary.
It only changes the coordinate system in which the coefficients are estimated.
Thus, by \eqref{eq:normalized_coefficients}, estimating $\bm{\beta}$ is
equivalent to estimating $\mathbf{a}$.
\end{remark}

\begin{remark}[Identifiable dictionary energy after projection]
	Let $\widetilde{\mathbf{g}}_j$ denote the $j$th column of the normalized SI dictionary matrix $\widetilde{\mathbf{G}}$,
	so that $\|\widetilde{\mathbf{g}}_j\|_2=1$, and define
	\begin{equation}
		\mathbf{P}_{S}
		=
		\mathbf{S}(\mathbf{S}^{H}\mathbf{S})^{-1}\mathbf{S}^{H}.
	\end{equation}
	The fraction of the $j$th SI basis waveform lying in the pilot subspace is
	\begin{equation}
		\rho_j
		=
		\|\mathbf{P}_{S}\widetilde{\mathbf{g}}_j\|_2^2,
		\qquad
		0\leq \rho_j\leq 1.
		\label{eq:pilot_alignment_factor}
	\end{equation}
	Since $\mathbf{P}_{\perp}=\mathbf{I}-\mathbf{P}_{S}$, the energy retained after
	projection is
	\begin{equation}
		\|\mathbf{B}_{\perp}^{H}\widetilde{\mathbf{g}}_j\|_2^2
		=
		\|\mathbf{P}_{\perp}\widetilde{\mathbf{g}}_j\|_2^2
		=
		1-\rho_j.
		\label{eq:retained_projected_energy}
	\end{equation}
	Thus, SI basis functions with $\rho_j\approx 1$ are weakly identifiable from the
	projected observation because most of their energy is indistinguishable from the
	protected pilot response over the training block. This loss is not specific to
	ridge or Bayesian estimation; it follows from enforcing invariance to the
	unknown uplink-channel coordinates.
\end{remark}
\begin{algorithm}[t]
		\caption{Common Pilot-Projected Nonlinear SI Estimation Front-End}
		\label{alg:common_frontend}
		\begin{algorithmic}[1]
			\State \textbf{Input:} received vector $\mathbf{y}$, pilot sequence $t[n]$, local waveform $x[n]$, order set $\mathcal{P}$, memory lengths $\{L_p^{(d)},L_p^{(c)}\}_{p\in\mathcal{P}}$
			\State Construct the pilot convolution matrix $\mathbf{S}$
			\State Generate $\phi_p^{(d)}[n]=x[n]|x[n]|^{p-1}$ for all $p\in\mathcal{P}$
			\State Generate $\phi_p^{(c)}[n]=x^*[n]|x[n]|^{p-1}$ for all $p\in\mathcal{P}$
			\State Construct the raw nonlinear dictionary $\mathbf{G}$
			\State Normalize the columns of $\mathbf{G}$ to obtain $\widetilde{\mathbf{G}}$
			\State Form $\mathbf{P}_{\perp}$ using \eqref{eq:projection_matrix}
			\State Construct, explicitly or implicitly, an orthonormal basis $\mathbf{B}_{\perp}$ for the range of $\mathbf{P}_{\perp}$
			\State Compute $\mathbf{y}_{p}=\mathbf{B}_{\perp}^{H}\mathbf{y}$
			\State Compute $\mathbf{A}=\mathbf{B}_{\perp}^{H}\widetilde{\mathbf{G}}$
			\State \textbf{Output:} $\mathbf{y}_{p}$, $\mathbf{A}$, and the scaling matrix $\mathbf{D}_{G}$
		\end{algorithmic}
	\end{algorithm}
	
	\section{BIC-Tuned Pilot-Projected Ridge Estimation}
	\label{sec:ridge_estimation}
	After pilot projection, the desired uplink-pilot component is removed from the
	training observation. However, the resulting SI estimation problem can still be
	ill conditioned. First, the widely-linear nonlinear dictionary may contain many
	correlated basis functions, especially when several nonlinear orders, memory
	taps, and direct and conjugate branches are included. Second, the projection
	onto the orthogonal complement of the pilot subspace may attenuate some SI basis
	directions, making the corresponding coefficients only weakly identifiable from
	the projected data. Consequently, an unregularized least-squares fit may amplify
	noise, residual modeling error, or weak singular directions of the projected
	dictionary. To obtain a stable full-dictionary estimator, this section introduces
	a pilot-projected ridge formulation and selects its regularization parameter.
	\subsection{Regularized Least-Squares Formulation}
	
	The pilot-projected ridge estimator solves
	\begin{equation}
		\hat{\bm{\beta}}_{\lambda}
		=
		\arg\min_{\bm{\beta}}
		\left\|
		\mathbf{y}_{p}
		-
		\mathbf{A}\bm{\beta}
		\right\|_2^2
		+
		\lambda\|\bm{\beta}\|_2^2,
		\label{eq:ridge_problem}
	\end{equation}
	where $\lambda > 0$ is the ridge regularization parameter. The solution is
	\begin{equation}
		\hat{\bm{\beta}}_{\lambda}
		=
		(\mathbf{A}^{H}\mathbf{A}+\lambda\mathbf{I})^{-1}
		\mathbf{A}^{H}\mathbf{y}_{p}.
		\label{eq:ridge_primal}
	\end{equation}

\subsection{Ridge-BIC Parameter Selection}
	
The ridge parameter controls the bias-variance tradeoff. If $\lambda$ is too
small, weak singular directions may amplify noise. If $\lambda$ is too large,
the SI estimate becomes overly biased. We therefore select $\lambda$ using a
ridge-adapted BIC. Unlike standard BIC in greedy model selection, which compares
discrete supports with different numbers of selected basis functions, the
proposed criterion keeps the full dictionary and replaces the hard parameter
count by the degrees of freedom of the ridge fit.
	
	Let
	\begin{equation}
		\Lambda=\{\lambda_1,\ldots,\lambda_{N_{\lambda}}\}
	\end{equation}
	denote a finite grid of candidate ridge parameters, where
	$N_{\lambda}=|\Lambda|$. For fixed $\lambda$, define
	the fitted projected SI component
	\begin{equation}
		\hat{\mathbf{y}}_{p,\lambda}
		=
		\mathbf{A}\hat{\bm{\beta}}_{\lambda}.
	\end{equation}
	Using \eqref{eq:ridge_primal},
	\begin{equation}
		\hat{\mathbf{y}}_{p,\lambda}
		=
		\mathbf{H}_{\lambda}\mathbf{y}_{p},
	\end{equation}
	where
	\begin{equation}
		\mathbf{H}_{\lambda}
		=
		\mathbf{A}
		(\mathbf{A}^{H}\mathbf{A}+\lambda\mathbf{I})^{-1}
		\mathbf{A}^{H}
		\label{eq:ridge_hat_matrix}
	\end{equation}
	is the ridge fitting matrix. In ordinary least squares, the trace of the fitting
	matrix equals the number of fitted parameters. Ridge does not perform hard
	parameter selection; it shrinks singular directions continuously. Therefore,
	$\mathrm{tr}(\mathbf{H}_{\lambda})$ is used as the effective number of degrees
	of freedom.
	
	\begin{proposition}
		Let $\{\sigma_i\}$ denote the singular values of $\mathbf{A}$. The effective
		degrees of freedom of the ridge estimator are
		\begin{equation}
			d_{\lambda}
			=
			\mathrm{tr}(\mathbf{H}_{\lambda})
			=
			\sum_i
			\frac{\sigma_i^2}{\sigma_i^2+\lambda}.
			\label{eq:ridge_df}
		\end{equation}
	\end{proposition}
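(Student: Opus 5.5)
I would prove the trace formula by writing the ridge fitting matrix $\mathbf{H}_{\lambda}$ in \eqref{eq:ridge_hat_matrix} in the singular-value coordinates of $\mathbf{A}$. Let $\mathbf{A}=\mathbf{U}\bm{\Sigma}\mathbf{V}^{H}$ be a full singular value decomposition. Here $\mathbf{U}\in\mathbb{C}^{N_p\times N_p}$ and $\mathbf{V}\in\mathbb{C}^{D\times D}$ are unitary, and $\bm{\Sigma}\in\mathbb{R}^{N_p\times D}$ has the singular values $\sigma_1,\ldots,\sigma_r$ on its leading diagonal and zeros elsewhere, with $r=\mathrm{rank}(\mathbf{A})$. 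The sum in \eqref{eq:ridge_df} is then understood over all $\min(N_p,D)$ diagonal entries. Zero singular values contribute nothing to it, so it does not matter whether the index runs over the $r$ nonzero values or over all of them.

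The first step uses unitarity of $\mathbf{V}$ to write
\begin{equation*}
\mathbf{A}^{H}\mathbf{A}+\lambda\mathbf{I}=\mathbf{V}\left(\bm{\Sigma}^{T}\bm{\Sigma}+\lambda\mathbf{I}_D\right)\mathbf{V}^{H}.
\end{equation*}
The middle factor is diagonal with entries $\sigma_i^2+\lambda$ for $i\le r$ and $\lambda$ otherwise. Since $\lambda>0$, all these entries are strictly positive. Hence the matrix is invertible even when $\mathbf{A}$ is rank deficient, for example when $D>N_p$, and its inverse is $\mathbf{V}(\bm{\Sigma}^{T}\bm{\Sigma}+\lambda\mathbf{I})^{-1}\mathbf{V}^{H}$.

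Substituting this into \eqref{eq:ridge_hat_matrix} gives
\begin{equation*}
\mathbf{H}_{\lambda}=\mathbf{U}\,\bm{\Sigma}\left(\bm{\Sigma}^{T}\bm{\Sigma}+\lambda\mathbf{I}\right)^{-1}\bm{\Sigma}^{T}\,\mathbf{U}^{H}.
\end{equation*}
The inner $N_p\times N_p$ matrix is diagonal, with entries $\sigma_i^2/(\sigma_i^2+\lambda)$ for $i\le r$ and zeros elsewhere. By cyclic invariance of the trace and $\mathbf{U}^{H}\mathbf{U}=\mathbf{I}$, we get $\mathrm{tr}(\mathbf{H}_{\lambda})$ equal to the sum of these diagonal entries, which is \eqref{eq:ridge_df}.

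An equivalent route avoids the full decomposition. One writes $\mathrm{tr}(\mathbf{H}_{\lambda})=\mathrm{tr}\bigl((\mathbf{A}^{H}\mathbf{A}+\lambda\mathbf{I})^{-1}\mathbf{A}^{H}\mathbf{A}\bigr)$ and applies the eigendecomposition of the Hermitian positive semidefinite matrix $\mathbf{A}^{H}\mathbf{A}$, whose eigenvalues are $\sigma_i^2$. The two factors commute and are simultaneously diagonalized, so the trace is $\sum_i \sigma_i^2/(\sigma_i^2+\lambda)$.

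There is no real obstacle. The only point needing care is the bookkeeping of the rectangular $\bm{\Sigma}$ when $D\neq N_p$ or $\mathbf{A}$ is rank deficient. This is handled by noting that $\lambda>0$ guarantees invertibility and that zero singular values contribute zero to the sum.
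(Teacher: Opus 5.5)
Your proposal is correct and follows essentially the same argument as the paper: substitute the SVD of $\mathbf{A}$ into $\mathbf{H}_{\lambda}$, reduce to a diagonal middle factor with entries $\sigma_i^2/(\sigma_i^2+\lambda)$, and take the trace. The only difference is that you use a full SVD with square unitary $\mathbf{V}$, which makes the inversion $(\mathbf{A}^{H}\mathbf{A}+\lambda\mathbf{I})^{-1}=\mathbf{V}(\bm{\Sigma}^{T}\bm{\Sigma}+\lambda\mathbf{I})^{-1}\mathbf{V}^{H}$ and the rank-deficient case $D>N_p$ slightly more transparent than in the paper's compact-SVD version.
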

	
	\begin{IEEEproof}
		Let the compact singular value decomposition of $\mathbf{A}$ be
		\begin{equation}
			\mathbf{A}=\mathbf{U}\bm{\Sigma}_{A}\mathbf{V}^{H},
		\end{equation}
		where the diagonal entries of $\bm{\Sigma}_{A}$ are the singular values
		$\{\sigma_i\}$. Substituting into \eqref{eq:ridge_hat_matrix} gives
		\begin{equation}
			\begin{aligned}
				\mathbf{H}_{\lambda}
				&=
				\mathbf{U}\bm{\Sigma}_{A}\mathbf{V}^{H}
				\left(
				\mathbf{V}\bm{\Sigma}_{A}^{H}\bm{\Sigma}_{A}\mathbf{V}^{H}
				+
				\lambda\mathbf{I}
				\right)^{-1}
				\mathbf{V}\bm{\Sigma}_{A}^{H}\mathbf{U}^{H} \\
				&=
				\mathbf{U}
				\bm{\Sigma}_{A}
				\left(
				\bm{\Sigma}_{A}^{H}\bm{\Sigma}_{A}
				+
				\lambda\mathbf{I}
				\right)^{-1}
				\bm{\Sigma}_{A}^{H}
				\mathbf{U}^{H}.
			\end{aligned}
		\end{equation}
		Hence, the eigenvalues of $\mathbf{H}_{\lambda}$ are
		$\sigma_i^2/(\sigma_i^2+\lambda)$, and taking the trace gives
		\eqref{eq:ridge_df}.
	\end{IEEEproof}
	
	Each term in \eqref{eq:ridge_df} lies between zero and one. If
	$\sigma_i^2\gg\lambda$, the corresponding singular direction contributes almost
	one degree of freedom. If $\sigma_i^2\ll\lambda$, that direction is strongly
	suppressed and contributes almost zero. Thus, $d_{\lambda}$ measures the
	effective model dimension after ridge shrinkage.
	
	For each $\lambda$, define the residual vector
	\begin{equation}
		\mathbf{r}_{\lambda}
		=
		\mathbf{y}_{p}
		-
		\mathbf{A}\hat{\bm{\beta}}_{\lambda},
	\end{equation}
	and the corresponding residual sum of squares (RSS)
	\begin{equation}
		\mathrm{RSS}_{\lambda}
		=
		\|\mathbf{r}_{\lambda}\|_2^2.
	\end{equation}
	Under the proper complex Gaussian working model in \eqref{eq:projected_model}, profiling over $\sigma_e^2$ gives, up to constants independent of $\lambda$, the complex-data negative log-likelihood term $N_p\log(\mathrm{RSS}_{\lambda}/N_p)$. We count both the sample size and the effective dimension in complex units: $N_p$ is the number of complex projected samples, and $d_{\lambda}$ is the complex effective dimension of the ridge smoother. With this convention, the ridge-BIC score is \begin{equation} \mathrm{BIC}^{\mathrm{ridge}}_{\lambda} = N_p \log \left( \frac{\mathrm{RSS}_{\lambda}}{N_p} \right) + d_{\lambda}\log N_p. \label{eq:ridge_bic} \end{equation}
	The first term rewards accurate fitting of the projected received signal, while
	the second term penalizes effective model complexity. This penalty is necessary
	because minimizing the residual alone would tend to favor weak regularization and
	overly flexible solutions, which may fit noise or model-mismatch artifacts rather
	than reproducible SI structure.
	
	The selected parameter is
	\begin{equation}
	\lambda^{\star}
	=
	\arg\min_{\lambda\in\Lambda}\mathrm{BIC}^{\mathrm{ridge}}_{\lambda}.
	\end{equation}
The final ridge estimate is
\begin{equation}
	\hat{\bm{\beta}}_{\mathrm{R}}
	=
	\hat{\bm{\beta}}_{\lambda^{\star}}.
\end{equation}
	
\subsection{Interpretation}

PP-RidgeBIC is a regularized full-dictionary estimator for the pilot-projected
SI model. It avoids hard selection of nonlinear basis functions and instead controls the
ill conditioning of the projected dictionary through a single ridge parameter
selected by $\mathrm{BIC}^{\mathrm{ridge}}_{\lambda}$. This makes it simple and numerically stable, but
also imposes the same shrinkage on all direct and conjugate components. It
therefore cannot express the fact that different nonlinear branches may have
different statistical relevance in the projected data. This limitation motivates
the Bayesian formulation in the next section, where the branch structure is
introduced through a probabilistic prior rather than through deterministic
preselection.
	
	\begin{algorithm}[t]
		\caption{BIC-Tuned Pilot-Projected Ridge Estimator (PP-RidgeBIC)}
		\label{alg:ridge}
		\begin{algorithmic}[1]
			\State \textbf{Input:} projected observation $\mathbf{y}_{p}$, projected dictionary $\mathbf{A}$, candidate set $\Lambda$
			\State Compute the singular values $\{\sigma_i\}$ of $\mathbf{A}$
			\State Initialize $\mathrm{BIC}^{\mathrm{ridge}}_{\min}=\infty$
			\For{each $\lambda\in\Lambda$}
		\State Compute $\hat{\bm{\beta}}_{\lambda}$ using \eqref{eq:ridge_primal}
			\State Compute
			\[
			\mathbf{r}_{\lambda}
			=
			\mathbf{y}_{p}-\mathbf{A}\hat{\bm{\beta}}_{\lambda}
			\]
			\State Compute $d_{\lambda}$ from \eqref{eq:ridge_df}
		\State Compute $\mathrm{BIC}^{\mathrm{ridge}}_{\lambda}$ from \eqref{eq:ridge_bic}
		\If{$\mathrm{BIC}^{\mathrm{ridge}}_{\lambda}<\mathrm{BIC}^{\mathrm{ridge}}_{\min}$}
		\State $\mathrm{BIC}^{\mathrm{ridge}}_{\min}\gets\mathrm{BIC}^{\mathrm{ridge}}_{\lambda}$
			\State $\lambda^{\star}\gets\lambda$
			\State $\hat{\bm{\beta}}_{\mathrm{R}}\gets\hat{\bm{\beta}}_{\lambda}$
			\EndIf
			\EndFor
			\State \textbf{Output:} $\hat{\bm{\beta}}_{\mathrm{R}}$ and $\lambda^{\star}$
		\end{algorithmic}
	\end{algorithm}
	
\section{Pilot-Projected Group-Sparse Bayesian Estimation}
\label{sec:gsbe_estimation}

\subsection{Motivation and Group Structure}
The normalized widely-linear memory-polynomial dictionary is naturally organized by nonlinear branches. For each nonlinear order $p$, the direct basis functions and the conjugate-dependent basis functions correspond to two separate branches, representing different possible contributions to the residual SI. The memory taps within each branch describe the memory response of the same nonlinear SI component and are therefore grouped to share a common regularization strength. Thus, each direct or conjugate branch constitutes exactly one group. Further splitting a branch into smaller groups would introduce
additional variance hyperparameters without representing distinct nonlinear
SI components and would reduce statistical sharing among the memory taps,
which is particularly undesirable under limited training. PP-GSBE uses this organization only at the prior level: the relevant branches are not assumed to be known in advance. Instead, the coefficients in each branch are assigned a common prior variance, and these variances are learned from the projected training data. 
	
	Let $N_{\mathrm{grp}}$ denote the number of groups, and let
	$\mathcal{G}_g\subseteq\{1,\ldots,D\}$ denote the index set of the coefficients
	belonging to group $g$. The size of group $g$ is
	\begin{equation}
		d_g = |\mathcal{G}_g|.
	\end{equation}
	The normalized coefficient vector is partitioned as
	\begin{equation}
		\bm{\beta}
		=
		\begin{bmatrix}
			\bm{\beta}_1^T &
			\bm{\beta}_2^T &
			\cdots &
			\bm{\beta}_{N_{\mathrm{grp}}}^T
		\end{bmatrix}^{T},
	\end{equation}
	where
	\begin{equation}
		\bm{\beta}_g=\bm{\beta}_{\mathcal{G}_g}
		\in\mathbb{C}^{d_g}
	\end{equation}
contains the coefficients of group $g$.
	
This partition specifies the level at which the Bayesian prior is tied across
coefficients. The term group-sparse therefore refers to Bayesian branch-wise
shrinkage: the relevance of each branch is inferred from the projected data
through its learned variance, without imposing an explicit hard support-selection
step.
	
	\subsection{Hierarchical Bayesian Model}
	
Under the projected observation model in \eqref{eq:projected_model}, the
likelihood is
	\begin{equation}
		p(\mathbf{y}_{p}|\bm{\beta},\sigma_e^2)
		=
		\frac{1}{(\pi\sigma_e^2)^{N_p}}
		\exp
		\left(
		-
		\frac{1}{\sigma_e^2}
		\|\mathbf{y}_{p}-\mathbf{A}\bm{\beta}\|_2^2
		\right).
		\label{eq:sbl_likelihood_density}
	\end{equation}
	
	Each group is assigned a zero-mean complex Gaussian prior:
	\begin{equation}
		\bm{\beta}_g
		\sim
		\mathcal{CN}
		\left(
		\mathbf{0},
		\gamma_g\mathbf{I}_{d_g}
		\right),
		\label{eq:group_prior}
	\end{equation}
	where $\gamma_g>0$ is the variance hyperparameter of group $g$. This use of
	variance hyperparameters for relevance learning follows the general sparse
	Bayesian learning theory~\cite{Tipping2001,Wipf2004}. Define
	\begin{equation}
		\bm{\Gamma}
		=
		\mathrm{diag}
		\left(
		\gamma_1\mathbf{I}_{d_1},
		\gamma_2\mathbf{I}_{d_2},
		\ldots,
		\gamma_{N_{\mathrm{grp}}}\mathbf{I}_{d_{N_{\mathrm{grp}}}}
		\right).
		\label{eq:Gamma}
	\end{equation}
	Let
$\bm{\gamma}
=
[\gamma_1,\ldots,\gamma_{N_{\mathrm{grp}}}]^T$.
Then
\begin{equation}
	\bm{\beta}
	\sim
	\mathcal{CN}(\mathbf{0},\bm{\Gamma}),
\end{equation}
with prior density
\begin{equation}
	p(\bm{\beta}|\bm{\gamma})
		=
		\frac{1}{\pi^D\det(\bm{\Gamma})}
		\exp
		\left(
		-
		\bm{\beta}^{H}\bm{\Gamma}^{-1}\bm{\beta}
		\right).
		\label{eq:beta_prior_density}
	\end{equation}

	\subsection{Posterior Distribution and Coefficient Estimate}
	
Let
\begin{equation}
	\bm{\theta}
	=
	\{\gamma_1,\ldots,\gamma_{N_{\mathrm{grp}}},\sigma_e^2\}
\end{equation}
denote the set of PP-GSBE hyperparameters. For fixed $\bm{\theta}$, the
posterior distribution of $\bm{\beta}$ is obtained from Bayes' rule:
	\begin{equation}
		p(\bm{\beta}|\mathbf{y}_{p},\bm{\theta})
		\propto
		p(\mathbf{y}_{p}|\bm{\beta},\sigma_e^2)
		p(\bm{\beta}|\bm{\gamma}).
	\end{equation}
	Substituting \eqref{eq:sbl_likelihood_density} and
	\eqref{eq:beta_prior_density}, and ignoring terms independent of $\bm{\beta}$,
	gives
	\begin{equation}
		\begin{aligned}
			\log p(\bm{\beta}|\mathbf{y}_{p},\bm{\theta})
			&=
			-
			\frac{1}{\sigma_e^2}
			\|\mathbf{y}_{p}-\mathbf{A}\bm{\beta}\|_2^2
			-
			\bm{\beta}^{H}\bm{\Gamma}^{-1}\bm{\beta}
			+
			\mathrm{const}.
		\end{aligned}
		\label{eq:posterior_log_start}
	\end{equation}
	Expanding the quadratic data-fitting term yields
	\begin{equation}
		\begin{aligned}
			\|\mathbf{y}_{p}-\mathbf{A}\bm{\beta}\|_2^2
			&=
			\mathbf{y}_{p}^{H}\mathbf{y}_{p}
			-
			\mathbf{y}_{p}^{H}\mathbf{A}\bm{\beta}
			-
			\bm{\beta}^{H}\mathbf{A}^{H}\mathbf{y}_{p}
			+
			\bm{\beta}^{H}\mathbf{A}^{H}\mathbf{A}\bm{\beta}.
		\end{aligned}
	\end{equation}
	The first term is independent of $\bm{\beta}$ and is absorbed into the
	constant. Hence,
	\begin{equation}
		\begin{aligned}
			\log p(\bm{\beta}|\mathbf{y}_{p},\bm{\theta})
			&=
			-
			\bm{\beta}^{H}
			\left(
			\frac{1}{\sigma_e^2}\mathbf{A}^{H}\mathbf{A}
			+
			\bm{\Gamma}^{-1}
			\right)
			\bm{\beta}
			\\
			&\quad
			+
			2\mathrm{Re}
			\left\{
			\bm{\beta}^{H}
			\frac{1}{\sigma_e^2}
			\mathbf{A}^{H}\mathbf{y}_{p}
			\right\}
			+
			\mathrm{const}.
		\end{aligned}
		\label{eq:posterior_log_quadratic}
	\end{equation}
	
	Define the posterior precision matrix
	\begin{equation}
		\bm{\Lambda}_{\beta}
		=
		\frac{1}{\sigma_e^2}\mathbf{A}^{H}\mathbf{A}
		+
		\bm{\Gamma}^{-1},
		\label{eq:sbl_precision}
	\end{equation}
	and the posterior linear term
	\begin{equation}
		\mathbf{b}_{\beta}
		=
		\frac{1}{\sigma_e^2}\mathbf{A}^{H}\mathbf{y}_{p}.
		\label{eq:sbl_b}
	\end{equation}
	Then the $\bm{\beta}$-dependent part of the posterior log-density is
	\begin{equation}
		-
		\bm{\beta}^{H}\bm{\Lambda}_{\beta}\bm{\beta}
		+
		2\mathrm{Re}\{\bm{\beta}^{H}\mathbf{b}_{\beta}\}.
	\end{equation}
	Let
	\begin{equation}
		\bm{\Sigma}
		=
		\bm{\Lambda}_{\beta}^{-1},
		\label{eq:sbl_cov}
	\end{equation}
	and
	\begin{equation}
		\bm{\mu}
		=
		\bm{\Lambda}_{\beta}^{-1}\mathbf{b}_{\beta}
		=
		\bm{\Sigma}
		\frac{1}{\sigma_e^2}
		\mathbf{A}^{H}\mathbf{y}_{p}.
		\label{eq:sbl_mean}
	\end{equation}
	Since $\bm{\Lambda}_{\beta}\bm{\mu}=\mathbf{b}_{\beta}$, completing the square
	gives
	\begin{equation}
		\begin{aligned}
			\bm{\beta}^{H}\bm{\Lambda}_{\beta}\bm{\beta}
			-
			2\mathrm{Re}\{\bm{\beta}^{H}\mathbf{b}_{\beta}\}
			&=
			(\bm{\beta}-\bm{\mu})^{H}
			\bm{\Lambda}_{\beta}
			(\bm{\beta}-\bm{\mu})
			-
			\bm{\mu}^{H}\bm{\Lambda}_{\beta}\bm{\mu}.
		\end{aligned}
	\end{equation}
	The last term is independent of $\bm{\beta}$ and is absorbed into the
	normalization constant. Therefore,
	\begin{equation}
	\bm{\beta}| \mathbf{y}_{p},\bm{\theta}
	\sim
		\mathcal{CN}(\bm{\mu},\bm{\Sigma}).
		\label{eq:sbl_posterior}
	\end{equation}
	
	Then, the coefficient estimate is chosen as the posterior mean:
	\begin{equation}
		\hat{\bm{\beta}}_{\mathrm{GSBE}}
		=
		\mathbb{E}_{\bm{\beta}|\mathbf{y}_{p},\bm{\theta}}
		[\bm{\beta}]
		=
		\bm{\mu}.
		\label{eq:gsbe_estimate}
	\end{equation}
	This is the minimum mean-square error estimate under squared-error loss:
	\begin{equation}
		\bm{\mu}
		=
		\arg\min_{\mathbf{z}}
		\mathbb{E}_{\bm{\beta}|\mathbf{y}_{p},\bm{\theta}}
		\left[
		\|\bm{\beta}-\mathbf{z}\|_2^2
		\right].
	\end{equation}
Since the posterior is Gaussian, the posterior mean also coincides with the
posterior mode. Thus, for fixed hyperparameters, the PP-GSBE estimate is also
the maximum a posteriori estimate.
	
	\subsection{Connection to Adaptive Group-Wise Ridge Regularization}
	
	The posterior mean can be written as
	\begin{equation}
		\bm{\mu}
		=
		\left(
		\mathbf{A}^{H}\mathbf{A}
		+
		\sigma_e^2\bm{\Gamma}^{-1}
		\right)^{-1}
		\mathbf{A}^{H}\mathbf{y}_{p}.
		\label{eq:sbl_as_ridge}
	\end{equation}
	This has the same form as ridge regression, except that the scalar ridge
	parameter is replaced by the matrix $\sigma_e^2\bm{\Gamma}^{-1}$. Since
	\begin{equation}
		\bm{\Gamma}^{-1}
		=
		\mathrm{diag}
		\left(
		\frac{1}{\gamma_1}\mathbf{I}_{d_1},
		\ldots,
		\frac{1}{\gamma_{N_{\mathrm{grp}}}}\mathbf{I}_{d_{N_{\mathrm{grp}}}}
		\right),
	\end{equation}
	the effective regularization strength of group $g$ is
	\begin{equation}
		\frac{\sigma_e^2}{\gamma_g}.
	\end{equation}
Thus, large $\gamma_g$ means weak shrinkage of group $g$, while small
$\gamma_g$ means strong shrinkage. PP-GSBE can therefore be interpreted as
ridge regression with a learned group-dependent regularization level.
	
	\subsection{Group Variance Update}
	
The group variances are updated through an expectation--maximization
(EM)-type procedure. At iteration $i$, let
	\begin{equation}
		\bm{\theta}^{(i)}
		=
		\{\gamma_1^{(i)},\ldots,\gamma_{N_{\mathrm{grp}}}^{(i)},
		(\sigma_e^2)^{(i)}\}
	\end{equation}
	denote the current hyperparameters. The posterior
	$p(\bm{\beta}|\mathbf{y}_{p},\bm{\theta}^{(i)})$ has mean
	$\bm{\mu}^{(i)}$ and covariance $\bm{\Sigma}^{(i)}$, computed using
	\eqref{eq:sbl_mean} and \eqref{eq:sbl_cov}, respectively. The next value
	$\gamma_g^{(i+1)}$ is then obtained by maximizing the expected log-prior of the
	coefficients in group $g$, with the expectation taken over this posterior
	distribution.
	
	For group $g$,
	\begin{equation}
		p(\bm{\beta}_g|\gamma_g)
		=
		\frac{1}{(\pi\gamma_g)^{d_g}}
		\exp
		\left(
		-
		\frac{1}{\gamma_g}
		\|\bm{\beta}_g\|_2^2
		\right),
	\end{equation}
	and hence
	\begin{equation}
		\log p(\bm{\beta}_g|\gamma_g)
		=
		-d_g\log\gamma_g
		-
		\frac{1}{\gamma_g}\|\bm{\beta}_g\|_2^2
		+
		\mathrm{const},
		\label{eq:group_prior_log}
	\end{equation}
	where the constant is independent of $\gamma_g$.
	
	Define the EM auxiliary function for $\gamma_g$ as
	\begin{equation}
		\mathcal{Q}_{\gamma_g}
		(\gamma_g|\bm{\theta}^{(i)})
		=
		\mathbb{E}_{\bm{\beta}|\mathbf{y}_{p},\bm{\theta}^{(i)}}
		\left[
		\log p(\bm{\beta}_g|\gamma_g)
		\right].
		\label{eq:Q_gamma_def}
	\end{equation}
	In this auxiliary function, the posterior distribution is fixed from the
	E-step, and $\gamma_g$ is the only variable optimized in the M-step.
	
	Substituting \eqref{eq:group_prior_log} into \eqref{eq:Q_gamma_def} gives
	\begin{equation}
		\mathcal{Q}_{\gamma_g}
		(\gamma_g|\bm{\theta}^{(i)})
		=
		-d_g\log\gamma_g
		-
		\frac{1}{\gamma_g}
		\mathbb{E}_{\bm{\beta}|\mathbf{y}_{p},\bm{\theta}^{(i)}}
		\left[
		\|\bm{\beta}_g\|_2^2
		\right]
		+
		\mathrm{const}.
		\label{eq:Q_gamma}
	\end{equation}
	Let
	\begin{equation}
		E_g^{(i)}
		=
		\mathbb{E}_{\bm{\beta}|\mathbf{y}_{p},\bm{\theta}^{(i)}}
		\left[
		\|\bm{\beta}_g\|_2^2
		\right].
	\end{equation}
	Then
	\begin{equation}
		\mathcal{Q}_{\gamma_g}
		(\gamma_g|\bm{\theta}^{(i)})
		=
		-d_g\log\gamma_g
		-
		\frac{E_g^{(i)}}{\gamma_g}
		+
		\mathrm{const}.
	\end{equation}
	Differentiating with respect to $\gamma_g$ gives
	\begin{equation}
		\frac{\partial \mathcal{Q}_{\gamma_g}}{\partial\gamma_g}
		=
		-\frac{d_g}{\gamma_g}
		+
		\frac{E_g^{(i)}}{\gamma_g^2}.
	\end{equation}
	Setting this derivative equal to zero yields
	\begin{equation}
		\gamma_g^{(i+1)}
		=
		\frac{E_g^{(i)}}{d_g}.
		\label{eq:gamma_Eg}
	\end{equation}
	
	It remains to compute $E_g^{(i)}$. Under the current posterior,
	\begin{equation}
		\bm{\beta}_g|\mathbf{y}_{p},\bm{\theta}^{(i)}
		\sim
		\mathcal{CN}(\bm{\mu}_g^{(i)},\bm{\Sigma}_g^{(i)}),
	\end{equation}
	where
	\begin{equation}
		\bm{\mu}_g^{(i)}
		=
		\bm{\mu}_{\mathcal{G}_g}^{(i)},
		\qquad
		\bm{\Sigma}_g^{(i)}
		=
		\bm{\Sigma}_{\mathcal{G}_g,\mathcal{G}_g}^{(i)}.
	\end{equation}
	Therefore,
	\begin{equation}
		E_g^{(i)}
		=
		\|\bm{\mu}_g^{(i)}\|_2^2
		+
		\mathrm{tr}(\bm{\Sigma}_g^{(i)}).
	\end{equation}
	Substituting into \eqref{eq:gamma_Eg} gives
	\begin{equation}
		\gamma_g^{(i+1)}
		=
		\frac{
			\|\bm{\mu}_g^{(i)}\|_2^2
			+
			\mathrm{tr}(\bm{\Sigma}_g^{(i)})
		}
		{d_g}.
		\label{eq:gamma_update}
	\end{equation}
	
Thus, the updated $\gamma_g$ is the posterior average power per coefficient in
group $g$.
	
	\subsection{Noise Variance Update}
	
	The effective projected noise variance is updated similarly. At iteration $i$,
	define the EM auxiliary function for $\sigma_e^2$ as
	\begin{equation}
		\mathcal{Q}_{\sigma}
		(\sigma_e^2|\bm{\theta}^{(i)})
		=
		\mathbb{E}_{\bm{\beta}|\mathbf{y}_{p},\bm{\theta}^{(i)}}
		\left[
		\log p(\mathbf{y}_{p}|\bm{\beta},\sigma_e^2)
		\right].
		\label{eq:Q_sigma_def}
	\end{equation}
	This function is the expected log-likelihood of the projected observation,
	where the expectation is taken with respect to the current posterior
	$p(\bm{\beta}|\mathbf{y}_{p},\bm{\theta}^{(i)})$, and $\sigma_e^2$ is treated as
	the variable to be optimized.
	
	From \eqref{eq:sbl_likelihood_density}, the log-likelihood term depending on
	$\sigma_e^2$ is
	\begin{equation}
		\log p(\mathbf{y}_{p}|\bm{\beta},\sigma_e^2)
		=
		-N_p\log\sigma_e^2
		-
		\frac{1}{\sigma_e^2}
		\|\mathbf{y}_{p}-\mathbf{A}\bm{\beta}\|_2^2
		+
		\mathrm{const}.
	\end{equation}
	Therefore,
	\begin{equation}
		\mathcal{Q}_{\sigma}
		(\sigma_e^2|\bm{\theta}^{(i)})
		=
		-N_p\log\sigma_e^2
		-
		\frac{1}{\sigma_e^2}
		E_{\mathrm{res}}^{(i)}
		+
		\mathrm{const},
	\end{equation}
	where
	\begin{equation}
		E_{\mathrm{res}}^{(i)}
		=
		\mathbb{E}_{\bm{\beta}|\mathbf{y}_{p},\bm{\theta}^{(i)}}
		\left[
		\|\mathbf{y}_{p}-\mathbf{A}\bm{\beta}\|_2^2
		\right].
	\end{equation}
	Differentiating with respect to $\sigma_e^2$ gives
	\begin{equation}
		\frac{\partial \mathcal{Q}_{\sigma}}{\partial\sigma_e^2}
		=
		-\frac{N_p}{\sigma_e^2}
		+
		\frac{E_{\mathrm{res}}^{(i)}}{(\sigma_e^2)^2}.
	\end{equation}
	Setting this derivative equal to zero yields
	\begin{equation}
		(\sigma_e^2)^{(i+1)}
		=
		\frac{E_{\mathrm{res}}^{(i)}}{N_p}.
		\label{eq:sigma_Eres}
	\end{equation}
	
It remains to express $E_{\mathrm{res}}^{(i)}$. Since
	\begin{equation}
		\bm{\beta}
		=
		\bm{\mu}^{(i)}
		+
		\left(
		\bm{\beta}-\bm{\mu}^{(i)}
		\right),
	\end{equation}
	we have
	\begin{equation}
		\mathbf{y}_{p}-\mathbf{A}\bm{\beta}
		=
		\mathbf{y}_{p}-\mathbf{A}\bm{\mu}^{(i)}
		-
		\mathbf{A}
		\left(
		\bm{\beta}-\bm{\mu}^{(i)}
		\right).
	\end{equation}
	Because
	\begin{equation}
		\mathbb{E}_{\bm{\beta}|\mathbf{y}_{p},\bm{\theta}^{(i)}}
		\left[
		\bm{\beta}-\bm{\mu}^{(i)}
		\right]
		=
		\mathbf{0},
	\end{equation}
	the cross term vanishes. Therefore,
	\begin{equation}
		\begin{aligned}
			E_{\mathrm{res}}^{(i)}
			&=
			\|\mathbf{y}_{p}-\mathbf{A}\bm{\mu}^{(i)}\|_2^2
			\\
			&\quad+
			\mathbb{E}_{\bm{\beta}|\mathbf{y}_{p},\bm{\theta}^{(i)}}
			\left[
			\left\|
			\mathbf{A}
			\left(
			\bm{\beta}-\bm{\mu}^{(i)}
			\right)
			\right\|_2^2
			\right].
		\end{aligned}
	\end{equation}
	Using
	\begin{equation}
		\mathbb{E}_{\bm{\beta}|\mathbf{y}_{p},\bm{\theta}^{(i)}}
		\left[
		\left(
		\bm{\beta}-\bm{\mu}^{(i)}
		\right)
		\left(
		\bm{\beta}-\bm{\mu}^{(i)}
		\right)^H
		\right]
		=
		\bm{\Sigma}^{(i)},
	\end{equation}
	we obtain
	\begin{equation}
		E_{\mathrm{res}}^{(i)}
		=
		\|\mathbf{y}_{p}-\mathbf{A}\bm{\mu}^{(i)}\|_2^2
		+
		\mathrm{tr}
		\left(
		\mathbf{A}\bm{\Sigma}^{(i)}\mathbf{A}^{H}
		\right).
	\end{equation}
	Thus, the update is
	\begin{equation}
		(\sigma_e^2)^{(i+1)}
		=
		\frac{
			\|\mathbf{y}_{p}-\mathbf{A}\bm{\mu}^{(i)}\|_2^2
			+
			\mathrm{tr}
			\left(
			\mathbf{A}\bm{\Sigma}^{(i)}\mathbf{A}^{H}
			\right)
		}
		{N_p}.
		\label{eq:noise_update}
	\end{equation}
	The first term is the residual energy obtained with the posterior mean, and the
	second term accounts for uncertainty in the coefficient estimate.
	
	\begin{remark}
	In practice, the hyperparameters are bounded to avoid numerical singularities.
	For example, one may enforce
	$\gamma_{\min}\le\gamma_g\le\gamma_{\max}$ and
	$\sigma_e^2\ge\sigma_{\min}^2$, where $\gamma_{\min}>0$ and
	$\sigma_{\min}^2>0$ are small positive lower bounds, and
	$\gamma_{\max}>\gamma_{\min}$ is an upper bound on the group variances.
	These bounds do not change the interpretation of the method; they only
	prevent ill-conditioned matrix inversions when a group is assigned
	very strong shrinkage.
	\end{remark}
	
\subsection{Algorithm and Interpretation}

The resulting algorithm alternates between posterior inference and
hyperparameter learning. Given the current values of $\{\gamma_g\}$ and
$\sigma_e^2$, the posterior mean and covariance are computed from
\eqref{eq:sbl_mean} and \eqref{eq:sbl_cov}, respectively. Then the group variances are updated
according to \eqref{eq:gamma_update}, and the effective noise variance is updated
according to \eqref{eq:noise_update}. This procedure is repeated until the
hyperparameters stabilize.

The branch-wise shrinkage mechanism is driven by the feedback between posterior
group energy and group variance. If group $g$ is weakly supported by the
projected received signal, its posterior energy becomes small, which reduces
$\gamma_g$ in the next iteration. Since the effective shrinkage applied to that
group increases as $\sigma_e^2/\gamma_g$ grows, a smaller variance estimate
attenuates the corresponding branch more strongly. Conversely, groups with
larger posterior energy receive larger variance estimates and experience weaker
shrinkage. Thus, PP-GSBE performs soft, data-driven branch-wise Bayesian
regularization without greedy hard selection. Compared with PP-RidgeBIC, this
branch-adaptive behavior increases the computational cost, but provides greater
flexibility in suppressing weak nonlinear orders and image branches under short
training or model mismatch.

\begin{algorithm}[t]
	\caption{Pilot-Projected Group-Sparse Bayesian Estimator (PP-GSBE)}
	\label{alg:gsbe}
		\begin{algorithmic}[1]
			\State \textbf{Input:} projected observation $\mathbf{y}_{p}$, projected dictionary $\mathbf{A}$, group partition $\{\mathcal{G}_g\}_{g=1}^{N_{\mathrm{grp}}}$, maximum iteration number $I_{\max}$, tolerance $\epsilon$
			\State Initialize $\gamma_g^{(0)}=1$ for all groups
			\State Initialize $(\sigma_e^2)^{(0)}$ from the projected received power
			\For{$i=0,\ldots,I_{\max}-1$}
			\State Form $\bm{\Gamma}^{(i)}$ from $\{\gamma_g^{(i)}\}$
			\State Compute
			\[
			\bm{\Lambda}_{\beta}^{(i)}
			=
			\frac{1}{(\sigma_e^2)^{(i)}}\mathbf{A}^{H}\mathbf{A}
			+
			\left(\bm{\Gamma}^{(i)}\right)^{-1}
			\]
			\State Compute
			\[
			\bm{\Sigma}^{(i)}
			=
			\left(\bm{\Lambda}_{\beta}^{(i)}\right)^{-1}
			\]
			\State Compute
			\[
			\bm{\mu}^{(i)}
			=
			\bm{\Sigma}^{(i)}
			\frac{1}{(\sigma_e^2)^{(i)}}
			\mathbf{A}^{H}\mathbf{y}_{p}
			\]
			\For{$g=1,\ldots,N_{\mathrm{grp}}$}
			\State Extract
			\[
			\bm{\mu}_g^{(i)}
			=
			\bm{\mu}_{\mathcal{G}_g}^{(i)},
			\qquad
			\bm{\Sigma}_g^{(i)}
			=
			\bm{\Sigma}_{\mathcal{G}_g,\mathcal{G}_g}^{(i)}
			\]
			\State Update $\gamma_g^{(i+1)}$ using \eqref{eq:gamma_update}
			\EndFor
			\State Update $(\sigma_e^2)^{(i+1)}$ using \eqref{eq:noise_update}
			\State Define
			\[
			\bm{\gamma}^{(i)}
			=
			[\gamma_1^{(i)},\ldots,\gamma_{N_{\mathrm{grp}}}^{(i)}]^T
			\]
			\State Define
			\[
			\bm{\gamma}^{(i+1)}
			=
			[\gamma_1^{(i+1)},\ldots,\gamma_{N_{\mathrm{grp}}}^{(i+1)}]^T
			\]
			\State Compute
			\[
			\Delta_{\gamma}
			=
			\frac{
				\|\bm{\gamma}^{(i+1)}-\bm{\gamma}^{(i)}\|_2
			}
			{
				\|\bm{\gamma}^{(i)}\|_2
			}
			\]
			\State Compute
			\[
			\Delta_{\sigma}
			=
			\frac{
				|(\sigma_e^2)^{(i+1)}-(\sigma_e^2)^{(i)}|
			}
			{
				|(\sigma_e^2)^{(i)}|
			}
			\]
			\If{$\max\{\Delta_{\gamma},\Delta_{\sigma}\}<\epsilon$}
			\State \textbf{break}
			\EndIf
			\EndFor
			\State Recompute $\bm{\mu}$ using the final hyperparameters
			\State Set $\hat{\bm{\beta}}_{\mathrm{GSBE}}=\bm{\mu}$
			\State \textbf{Output:} $\hat{\bm{\beta}}_{\mathrm{GSBE}}$, learned $\{\gamma_g\}$, learned $\sigma_e^2$
		\end{algorithmic}
	\end{algorithm}

	\section{Computational Complexity}
	
	In this section, we characterize the computational and memory complexities of the common pilot-projected front-end and the two proposed estimators, PP-RidgeBIC and PP-GSBE. 
    
We use the dimensions $N_p$ and $D$ defined above and the maximum iteration
number $I_{\max}$ specified in Algorithm~\ref{alg:gsbe}.
	
The
	raw nonlinear dictionary $\mathbf{G}$ is constructed from the known local
	transmit samples with cost $\mathcal{O}(ND)$. Column normalization also costs
	$\mathcal{O}(ND)$. The pilot-subspace projection can be applied without forming
	a dense $N\times N$ projection matrix. Using
	\begin{equation}
		\mathbf{P}_{\perp}\mathbf{z}
		=
		\mathbf{z}
		-
		\mathbf{S}(\mathbf{S}^{H}\mathbf{S})^{-1}\mathbf{S}^{H}\mathbf{z},
	\end{equation}
	forming $\mathbf{S}^{H}\mathbf{S}$ costs $\mathcal{O}(NL_u^2)$, and applying the
	projection to all dictionary columns costs
	$\mathcal{O}(NL_uD+L_u^2D)$. Since $L_u$ is usually much smaller than $N$, this
	front-end cost is dominated by dictionary construction, normalization, and
	projection.
	
	For PP-RidgeBIC, the singular values of
	$\mathbf{A}\in\mathbb{C}^{N_p\times D}$ are required to compute the effective
	degrees of freedom in \eqref{eq:ridge_df}. With
	\begin{equation}
		m=\min\{N_p,D\},
	\end{equation}
	a compact singular-value decomposition of $\mathbf{A}$ costs
	\begin{equation}
		\mathcal{O}(N_pD\,m).
	\end{equation}
	For each candidate $\lambda\in\Lambda$, the ridge estimate is
	computed by solving the $D\times D$ linear system in
	\eqref{eq:ridge_primal}, with a per-candidate cost of
	$\mathcal{O}(D^3)$. Computing the fitted signal and residual
	costs $\mathcal{O}(N_pD)$. Therefore, the total complexity
	of PP-RidgeBIC is
	\begin{equation}
	    \mathcal{O}\left(
	N_pD\,m + N_pD^2
	+ N_{\lambda}(D^3+N_pD)
	\right).
	\end{equation}
	
	For PP-GSBE, the Gram matrix $\mathbf{A}^{H}\mathbf{A}$ and the matched vector
	$\mathbf{A}^{H}\mathbf{y}_{p}$ can be precomputed with costs
	$\mathcal{O}(N_pD^2)$ and $\mathcal{O}(N_pD)$, respectively. At each iteration,
	forming the posterior precision matrix in Algorithm~\ref{alg:gsbe} costs
	$\mathcal{O}(D^2)$ after this precomputation. Computing the posterior covariance
	requires solving a $D\times D$ linear system, whose dominant cost is
	$\mathcal{O}(D^3)$. Computing the posterior mean costs $\mathcal{O}(D^2)$. The
	group-variance updates require the posterior energies of the predefined groups,
	and the noise-variance update in \eqref{eq:noise_update} requires the residual
	term and the trace term. With $\mathbf{A}^{H}\mathbf{A}$ precomputed, these
	updates cost $\mathcal{O}(D^2+N_pD)$ per iteration. Hence, the worst-case
	complexity of PP-GSBE is
	\begin{equation}
		\mathcal{O}
		\left(
		N_pD^2
		+
		I_{\max}
		\left(
		D^3+D^2+N_pD
		\right)
		\right).
	\end{equation}
	If the stopping criterion in Algorithm~\ref{alg:gsbe} is satisfied before
	$I_{\max}$ iterations, then $I_{\max}$ is replaced by the actual number of
	executed iterations. The final posterior-mean recomputation has the same order
	as one additional iteration and is absorbed in the above expression.
	
The memory requirement of PP-RidgeBIC is dominated by storing the projected
dictionary and the factors of the $D\times D$ primal ridge system, giving
\begin{equation}
	\mathcal{O}(N_pD+D^2).
\end{equation}
The memory requirement of PP-GSBE is dominated by storing the projected
dictionary, the Gram matrix, and the posterior covariance, giving
\begin{equation}
	\mathcal{O}(N_pD+D^2).
\end{equation}
	
	Therefore, PP-RidgeBIC has lower computational complexity because it performs a
	finite regularized grid search with one global shrinkage parameter. PP-GSBE has
	higher complexity because it repeatedly updates the posterior mean, posterior
	covariance, group variances, and effective noise variance. 

	\section{Uplink Channel Estimation and Data-Phase SI Reconstruction}
	\label{sec:channel_estimation_data_reconstruction}
	
	For either estimator, let $\hat{\bm{\beta}}$ denote either
	$\hat{\bm{\beta}}_{\mathrm{R}}$ or
	$\hat{\bm{\beta}}_{\mathrm{GSBE}}$. Through
	\eqref{eq:normalized_coefficients}, $\hat{\bm{\beta}}$ represents the estimated
	SI coefficients in normalized coordinates. We next estimate $\mathbf{h}_u$ in
	\eqref{eq:joint_training_model} from the SI-cleaned training observation. The
	reconstructed training-phase SI is
	\begin{equation}
		\hat{\mathbf{y}}_{\mathrm{SI,tr}}
		=
		\widetilde{\mathbf{G}}\hat{\bm{\beta}}.
		\label{eq:training_si_reconstruction}
	\end{equation}
	The cleaned training observation is
	\begin{equation}
		\tilde{\mathbf{y}}_{\mathrm{tr}}
		=
		\mathbf{y}
		-
		\hat{\mathbf{y}}_{\mathrm{SI,tr}}.
		\label{eq:cleaned_training_signal}
	\end{equation}
	Since the pilot convolution matrix $\mathbf{S}$ is known and has full column
	rank, the uplink channel is estimated by least squares as
	\begin{equation}
	\hat{\mathbf{h}}_{u}
=
\arg\min_{\mathbf{h}_u}
\left\|
\tilde{\mathbf{y}}_{\mathrm{tr}}
-
\mathbf{S}\mathbf{h}_u
\right\|_2^2
		=
		(\mathbf{S}^{H}\mathbf{S})^{-1}\mathbf{S}^{H}
		\tilde{\mathbf{y}}_{\mathrm{tr}}.
		\label{eq:uplink_channel_ls}
	\end{equation}
	Let the training-phase SI reconstruction error be
	\begin{equation}
	\bm{\delta}_{\mathrm{tr}}
=
\mathbf{y}_{\mathrm{SI}}
-
\hat{\mathbf{y}}_{\mathrm{SI,tr}}.
	\end{equation}
	Then
	\begin{equation}
		\tilde{\mathbf{y}}_{\mathrm{tr}}
		=
		\mathbf{S}\mathbf{h}_u
		+
		\bm{\delta}_{\mathrm{tr}}
		+
		\mathbf{w},
	\end{equation}
	and the resulting least-squares channel-estimation error is
	\begin{equation}
		\hat{\mathbf{h}}_{u}-\mathbf{h}_{u}
		=
		(\mathbf{S}^{H}\mathbf{S})^{-1}
		\mathbf{S}^{H}
		\left(
		\bm{\delta}_{\mathrm{tr}}+\mathbf{w}
		\right).
		\label{eq:channel_error_from_si_residual}
	\end{equation}
	Thus, residual SI reconstruction error enters the channel estimate through the
	pilot matched-filtering operation.
	
	During data transmission, let $\mathbf{y}_{\mathrm{data}}$,
$\mathbf{y}_{u,\mathrm{data}}$, and $\mathbf{y}_{\mathrm{SI,data}}$ denote the
received, desired-uplink, and residual-SI data-phase vectors, respectively,
and let $\mathbf{G}_{\mathrm{data}}$ denote the raw data-phase dictionary
built from the local transmit samples using the same nonlinear orders and
memory lengths as in training. The normalized data-phase
	dictionary is
	\begin{equation}
		\widetilde{\mathbf{G}}_{\mathrm{data}}
		=
		\mathbf{G}_{\mathrm{data}}\mathbf{D}_{G}^{-1},
	\end{equation}
	where $\mathbf{D}_{G}$ is the column-normalization matrix obtained from the
	training dictionary. The reconstructed data-phase SI is
	\begin{equation}
		\hat{\mathbf{y}}_{\mathrm{SI,data}}
		=
		\widetilde{\mathbf{G}}_{\mathrm{data}}\hat{\bm{\beta}}.
		\label{eq:data_si_reconstruction}
	\end{equation}
	The digitally cancelled data-phase received vector is
	\begin{equation}
		\tilde{\mathbf{y}}_{\mathrm{data}}
		=
		\mathbf{y}_{\mathrm{data}}
		-
		\hat{\mathbf{y}}_{\mathrm{SI,data}}.
		\label{eq:data_cancelled}
	\end{equation}
	The channel estimate $\hat{\mathbf{h}}_{u}$ in \eqref{eq:uplink_channel_ls} is
	then used for data detection over $\tilde{\mathbf{y}}_{\mathrm{data}}$.

\section{Numerical Evaluation}
\label{sec:numerical_results}

This section evaluates the proposed pilot-projected nonlinear SI cancellation
framework in a single-antenna in-band full-duplex link. The simulations follow
the signal model in Sections~\ref{sec:system_model}
and~\ref{sec:pilot_projected_model}. The true residual SI is generated by a
nonlinear RF-chain data-generating model, whereas all cancellers use the finite
widely-linear memory-polynomial dictionary in Section~\ref{sec:wl_dictionary}.
Thus, the results evaluate both pilot-contaminated SI estimation and robustness
to nonlinear model mismatch.

\subsection{Simulation Setup and Metrics}

The desired uplink uses quadrature phase-shift keying (QPSK) modulation over
an $L_u=3$ tap frequency-selective channel. The local transmit waveform is generated as a colored complex Gaussian
sequence using a first-order autoregressive filter with coefficient $0.95$,
followed by power normalization. The true residual SI is generated by a compact
RF-chain model with a memoryless saturating power amplifier (PA), I/Q image
leakage, and an $L_{\rm SI}=8$ tap coupling channel: $x_{\rm PA}[n]=x[n]/(1+
\alpha_{\rm PA}|x[n]|^2)$, $x_{\rm IQ}[n]=x_{\rm PA}[n]+
\epsilon_{\rm IQ}x_{\rm PA}^*[n]$, and $y_{\rm SI}[n]=
\sum_{\ell=0}^{L_{\rm SI}-1}h_{\rm SI}[\ell]x_{\rm IQ}[n-\ell]$,
where $h_{\rm SI}[\ell]$ denotes the $\ell$th tap of the SI coupling channel.
Unless otherwise stated, the PA nonlinearity strength and I/Q image-leakage
amplitude are set to $\alpha_{\rm PA}=0.10$ and $\epsilon_{\rm IQ}=0.03$,
respectively.

The estimation dictionary uses odd nonlinear orders
$\mathcal{P}=\{1,3,5,7\}$, direct-branch memory lengths $\{12,10,8,6\}$, and
conjugate-branch memory lengths $\{10,8,6,4\}$. Hence, the dictionary contains
$D=64$ basis functions, grouped into direct and conjugate nonlinear branches.
The dictionary columns are normalized over the training block, and the same
normalization factors are used for data-block SI reconstruction.

The desired-uplink signal-to-noise ratio is $30$ dB and the pre-digital
residual INR is $35$ dB,
corresponding to a pre-digital desired-signal-to-residual-SI ratio of $-5$ dB.
The training length is varied as
\[
N\in\{128,192,256,384,512,768,2000\},
\]
where the larger values are included mainly to illustrate the convergence trend
rather than typical uplink-pilot lengths. In conventional SI-only training,
longer training blocks are easier to obtain because no simultaneous uplink pilot
is required. The data block contains $N_{\rm data}=20000$ QPSK symbols. The
training-length results are averaged over $50$ Monte Carlo realizations. In the
robustness sweeps, $N=512$ and $20$ Monte Carlo realizations are used.

Since the SI model is used for residual SI reconstruction and cancellation
rather than identification of physical RF parameters, all signal-domain
metrics are evaluated over the independent data block. Let $y_{\rm SI,data}[n]$, $\hat{y}_{\rm SI,data}[n]$,
$y_{u,\rm data}[n]$, and $\tilde{y}_{\rm data}[n]$ denote the $n$th entries
of $\mathbf{y}_{\mathrm{SI,data}}$,
$\hat{\mathbf{y}}_{\mathrm{SI,data}}$,
$\mathbf{y}_{u,\mathrm{data}}$, and
$\tilde{\mathbf{y}}_{\mathrm{data}}$, respectively. Let
\[
P_{\rm SI}
=
\frac{1}{N_{\rm data}}
\sum_{n=1}^{N_{\rm data}} |y_{\rm SI,data}[n]|^2
\]
be the data-block SI power before digital cancellation, and let
\[
P_{\rm res}
=
\frac{1}{N_{\rm data}}
\sum_{n=1}^{N_{\rm data}}
|y_{\rm SI,data}[n]-\hat{y}_{\rm SI,data}[n]|^2
\]
be the residual SI reconstruction-error power after digital cancellation. The
digital self-interference-cancellation (SIC) gain and residual INR are defined as
\[
{\rm SIC}
=
10\log_{10}
\left(
\frac{P_{\rm SI}}{P_{\rm res}}
\right),
\qquad
{\rm INR}_{\rm res}
=
10\log_{10}
\left(
\frac{P_{\rm res}}{\sigma^2}
\right),
\]
where $\sigma^2$ is as defined in \eqref{eq:received_signal}. The pre-digital
residual INR used in the robustness sweep is
\[
{\rm INR}_{\rm pre}
=
10\log_{10}
\left(
\frac{P_{\rm SI}}{\sigma^2}
\right).
\]
The dashed noise-floor line in the residual-INR figures corresponds to
${\rm INR}_{\rm res}=0$ dB.

Using the least-squares channel estimate in \eqref{eq:uplink_channel_ls}, the
channel-estimation NMSE is
\[
{\rm NMSE}_{h}
=
\frac{\|\hat{\mathbf{h}}_u-\mathbf{h}_u\|_2^2}
{\|\mathbf{h}_u\|_2^2}.
\]
The BER is computed after data-phase SI cancellation, uplink channel estimation,
and QPSK sequence detection. The received-signal EVM is defined as
\[
{\rm EVM}
=
\sqrt{
	\frac{
		\frac{1}{N_{\rm data}}
		\sum_{n=1}^{N_{\rm data}}
		|\tilde{y}_{\rm data}[n]-y_{u,\rm data}[n]|^2
	}{
		\frac{1}{N_{\rm data}}
		\sum_{n=1}^{N_{\rm data}}
		|y_{u,\rm data}[n]|^2
	}
},
\]
and is plotted in dB as $20\log_{10}({\rm EVM})$.

\subsection{Benchmark Schemes}

We compare the two proposed estimators with three nonlinear SIC baselines. The
baselines are chosen to isolate three effects: the role of eliminating the
uplink-channel nuisance component, the effect of greedy hard selection after
projection, and the performance of orthonormalized adaptive polynomial
cancellation under the same projected training model. We do not include a
separate joint estimator that regularizes the SI coefficients while leaving the
uplink channel unregularized, because Proposition~\ref{prop:nuisance_elimination}
shows that such a formulation reduces to the pilot-projected problem after the
uplink-channel nuisance parameter is eliminated.

\textbf{NoProj-Ridge} uses the same full widely-linear nonlinear dictionary as
PP-RidgeBIC, but fits the coefficients directly from the unprojected received
training signal. Its ridge parameter is selected using the same ridge-BIC rule
as PP-RidgeBIC. This baseline follows the widely-linear nonlinear SIC modeling
framework used for full-duplex radios~\cite{Korpi2014} and tests whether
ridge regularization without pilot projection can prevent desired-pilot leakage.

\textbf{PP-DOMP}, the pilot-projected doubly orthogonal matching pursuit
(DOMP) baseline, first applies the pilot projection and then performs greedy
basis selection on the projected dictionary. Specifically, DOMP is used to
select nonlinear basis functions, and standard BIC is used as the
stopping/model-selection criterion. This baseline
is motivated by greedy nonlinear basis-selection methods for digital
SIC~\cite{Shahghasi2025} and by BIC-based model selection~\cite{Schwarz1978}.
It tests whether hard nonlinear model selection is sufficient after the
desired-pilot subspace has been removed.

\textbf{PP-AOP}, the pilot-projected adaptive orthonormal-polynomial baseline,
first applies the pilot projection, then orthonormalizes the projected nonlinear
dictionary and updates the coefficients using a normalized LMS-type recursion. This baseline represents adaptive
orthonormalized nonlinear cancellation approaches related to augmented nonlinear
adaptive filtering and orthonormalized polynomial modeling~\cite{Li2018,Lee2025}.
It tests the benefit of orthonormalized adaptive polynomial cancellation under
the same pilot-projected observation model.

The proposed schemes are \textbf{PP-RidgeBIC} and \textbf{PP-GSBE}.
PP-RidgeBIC uses the pilot-projected full dictionary and selects its ridge
parameter by minimizing $\mathrm{BIC}^{\mathrm{ridge}}_{\lambda}$ over the grid
$\{10^{-10},10^{-9.5},\ldots,10^2\}$. PP-GSBE uses the same projected
dictionary with branch-wise Bayesian shrinkage and does not require BIC-based
parameter selection. PP-GSBE uses at most $150$ iterations, tolerance $10^{-5}$,
and group-variance bounds $10^{-10}$ and $10^8$.

\begin{figure}[t]
	\centering
	\includegraphics[width=.8\linewidth]{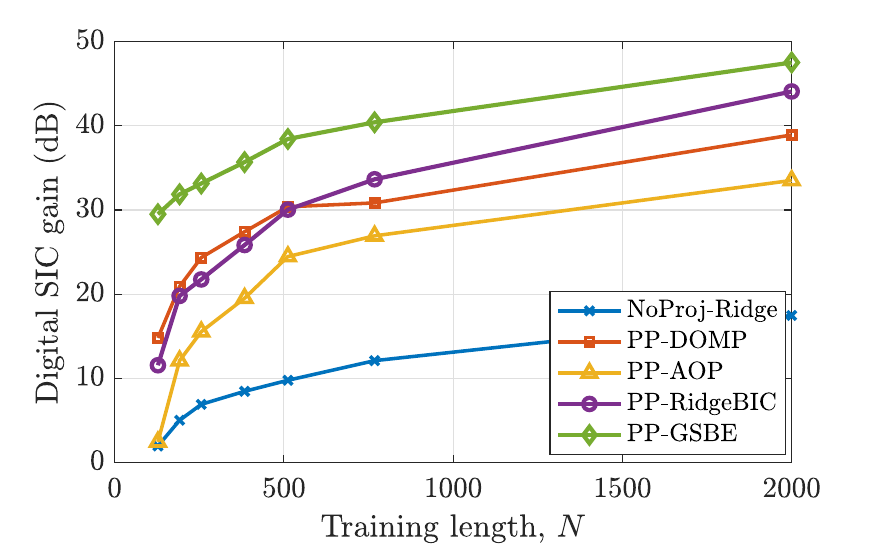}
	\caption{Digital SI cancellation gain versus training length.}
	\label{fig:sic_vs_training}
\end{figure}

\begin{figure}[t]
	\centering
	\includegraphics[width=.8\linewidth]{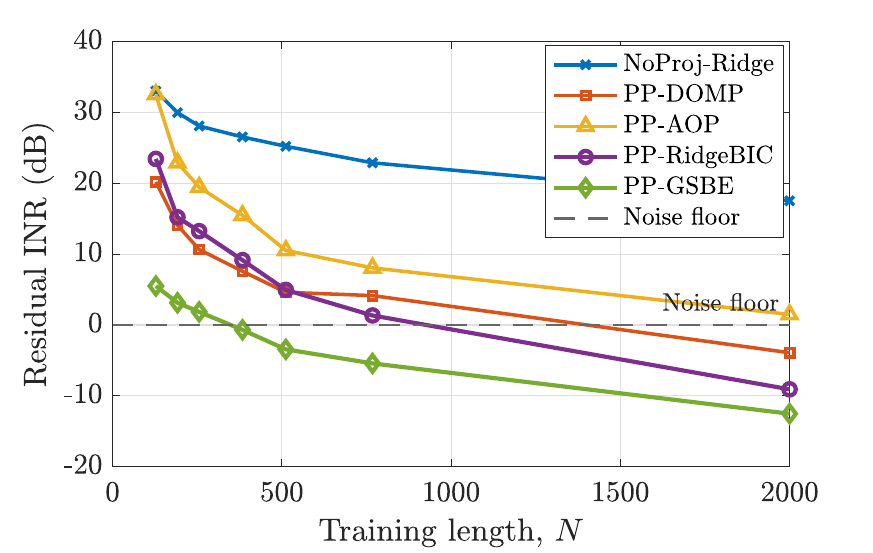}
	\caption{Residual INR after digital SI cancellation versus training length.
		The dashed line denotes the noise floor.}
	\label{fig:inr_vs_training}
\end{figure}

\subsection{Training-Length Results}

Fig.~\ref{fig:sic_vs_training} shows the digital SIC gain versus training
length. NoProj-Ridge provides only limited cancellation, confirming that
regularization without pilot projection cannot reliably separate the desired
pilot from the nonlinear SI dictionary. All projected schemes provide substantial
gains, demonstrating the importance of removing the pilot subspace before SI
estimation. PP-GSBE achieves the highest SIC gain over the full training range
because its branch-wise Bayesian shrinkage attenuates weak nonlinear branches
while retaining useful direct and conjugate components. At $N=2000$, PP-GSBE
reaches about $47$--$48$ dB, followed by PP-RidgeBIC, PP-DOMP, PP-AOP, and
NoProj-Ridge.

Fig.~\ref{fig:inr_vs_training} shows the corresponding residual INR. PP-GSBE is
the first scheme to push the residual SI below the noise floor. Around $N=512$,
it already achieves negative residual INR, whereas PP-DOMP and PP-RidgeBIC
remain several dB above the noise floor. At $N=2000$, PP-GSBE reaches
approximately $-13$ dB residual INR, showing that the proposed Bayesian
shrinkage improves data-block SI reconstruction under model mismatch.

\begin{figure}[t]
	\centering
	\includegraphics[width=.8\linewidth]{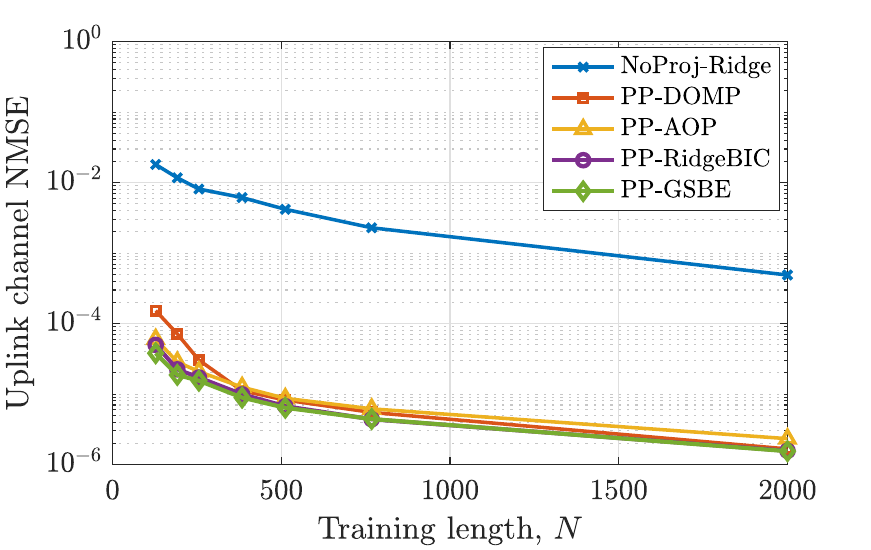}
	\caption{Uplink channel-estimation NMSE after digital SI cancellation.}
	\label{fig:nmse_vs_training}
\end{figure}

\begin{figure}[t]
	\centering
	\includegraphics[width=.8\linewidth]{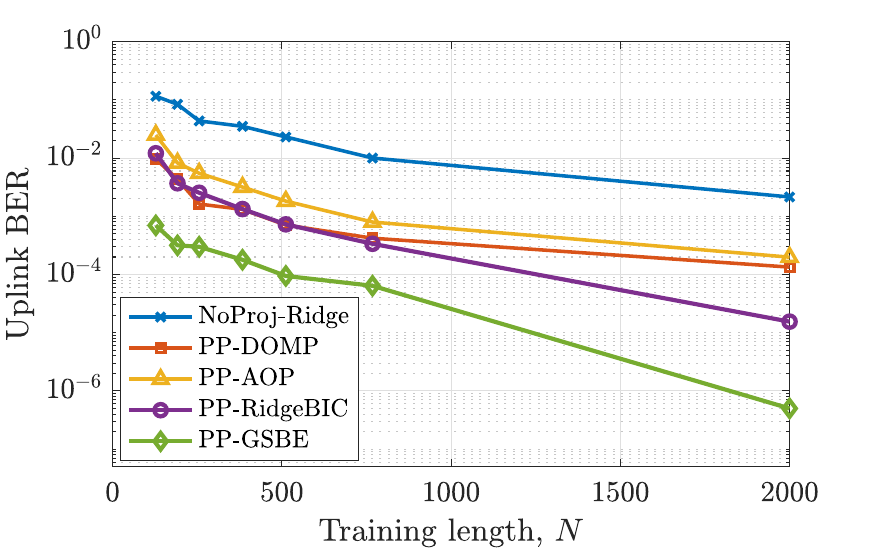}
	\caption{Uplink BER after digital SI cancellation, channel estimation, and
		QPSK sequence detection.}
	\label{fig:ber_vs_training}
\end{figure}

Fig.~\ref{fig:nmse_vs_training} reports the uplink channel-estimation NMSE after
SI cancellation. NoProj-Ridge has much larger NMSE because its SI estimate can
absorb part of the desired pilot. The projected schemes significantly reduce the NMSE and exhibit similar
performance because they all remove the same pilot subspace before SI estimation; as \eqref{eq:channel_error_from_si_residual} shows, only the
component of the residual SI error within this subspace affects the channel
estimate. PP-GSBE gives the best NMSE, particularly for short and moderate
training lengths.

Fig.~\ref{fig:ber_vs_training} shows the uplink BER after SI cancellation,
channel estimation, and data detection. The BER trend is consistent with the
residual-INR and NMSE results. NoProj-Ridge remains BER-limited, while the
projected methods provide large reliability gains. PP-DOMP improves when enough
training samples are available, but hard support selection is less robust for
short training blocks. PP-GSBE achieves the lowest BER over the considered range
and approaches the finite-simulation error floor at large $N$.

\begin{figure}[t]
	\centering
	\includegraphics[width=.8\linewidth]{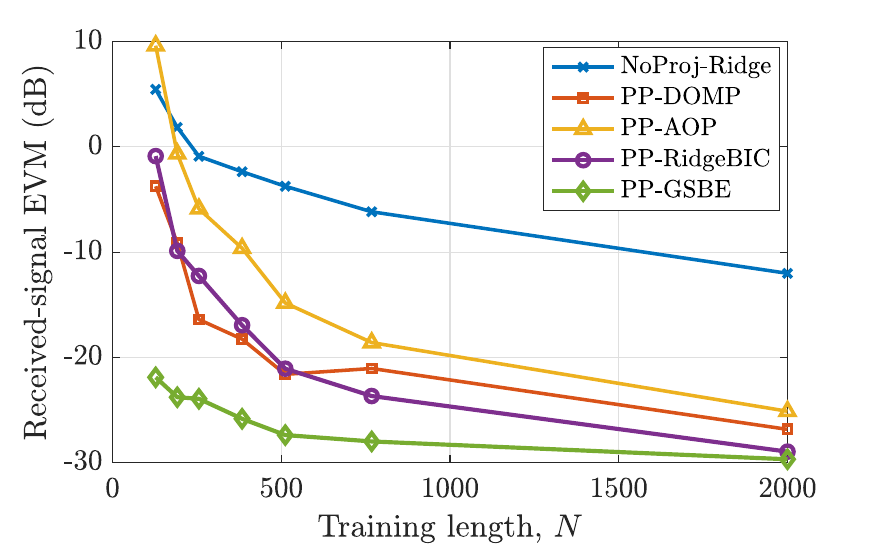}
	\caption{Received-signal EVM after digital SI cancellation.}
	\label{fig:evm_vs_training}
\end{figure}

Fig.~\ref{fig:evm_vs_training} shows the received-signal EVM after digital SI
cancellation. Since the true SI waveform is not exactly represented by the finite
estimation dictionary, EVM is a useful signal-domain measure of the cleaned
received waveform. PP-GSBE achieves the lowest EVM for all training lengths and
approaches about $-30$ dB at $N=2000$, confirming that the proposed estimator
improves the actual received-signal quality, not only the fitted training
residual.

\begin{figure}[t]
	\centering
	\includegraphics[width=.8\linewidth]{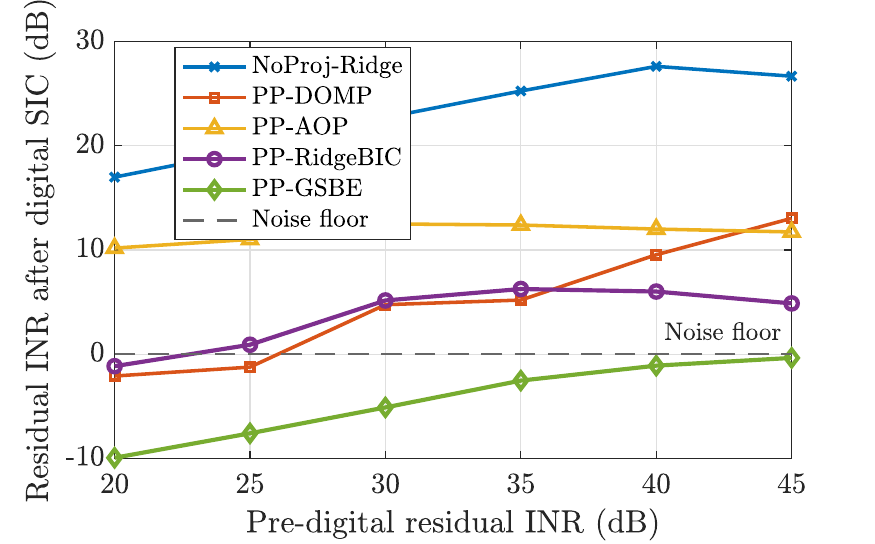}
	\caption{Residual INR after digital SI cancellation versus pre-digital
		residual INR. The training length is fixed to $N=512$.}
	\label{fig:inr_vs_preinr}
\end{figure}

\subsection{Robustness Results}

Fig.~\ref{fig:inr_vs_preinr} shows the residual INR versus the pre-digital
residual INR. This experiment evaluates robustness to the amount of SI left by
the analog/RF cancellation stage. NoProj-Ridge deteriorates rapidly as the
pre-digital INR increases. PP-AOP remains above the noise floor, while PP-DOMP
and PP-RidgeBIC are effective at low and moderate INR but degrade at high INR.
PP-GSBE is the most robust method, keeping the residual SI below or close to the
noise floor over almost the entire sweep.

\begin{figure}[t]
	\centering
	\includegraphics[width=.8\linewidth]{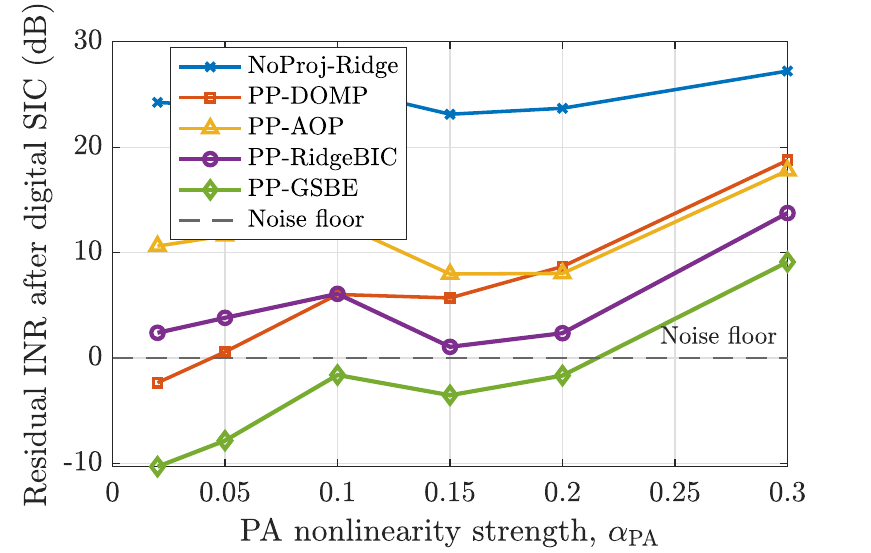}
	\caption{Residual INR after digital SI cancellation versus PA nonlinearity
		strength $\alpha_{\rm PA}$. The training length is fixed to $N=512$.}
	\label{fig:inr_vs_pa}
\end{figure}

Fig.~\ref{fig:inr_vs_pa} studies the effect of PA nonlinearity. Increasing
$\alpha_{\rm PA}$ changes the nonlinear SI waveform and increases the mismatch
between the true RF-chain response and the finite estimation dictionary.
PP-GSBE gives the lowest residual INR for all tested values and degrades more
gracefully under strong nonlinearity. The curves are not strictly monotonic
because changing $\alpha_{\rm PA}$ also changes how the normalized SI waveform
projects onto the finite dictionary.

\begin{figure}[t]
	\centering
	\includegraphics[width=.8\linewidth]{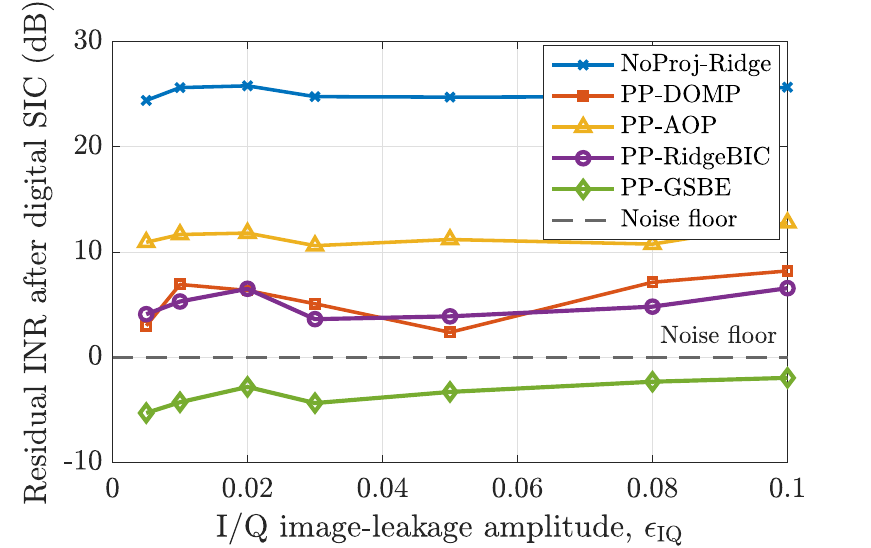}
	\caption{Residual INR after digital SI cancellation versus I/Q image-leakage
		amplitude $\epsilon_{\rm IQ}$. The training length is fixed to $N=512$.}
	\label{fig:inr_vs_iq}
\end{figure}

Fig.~\ref{fig:inr_vs_iq} shows the residual INR versus I/Q image-leakage
amplitude. This experiment evaluates whether the cancellers can exploit the
conjugate branches of the widely-linear dictionary when image-like SI components
become stronger. PP-GSBE remains below the noise floor over the full sweep,
indicating that its branch-wise shrinkage effectively retains useful
conjugate branches while attenuating unnecessary components.

Overall, the results show that pilot projection is essential when the desired
uplink pilot is present during nonlinear SI training. Full-dictionary
regularization improves stability, but uniform shrinkage is less flexible than
group-sparse Bayesian shrinkage. Hard selection is useful when sufficient
training data are available, but is less robust under short training and model
mismatch. The proposed PP-GSBE consistently provides the best performance across
SIC gain, residual INR, channel NMSE, BER, EVM, and robustness to residual SI
power, PA nonlinearity, and I/Q image leakage.

\section{Conclusion}

This paper developed a pilot-projected estimation framework for nonlinear
digital SI cancellation and uplink-channel estimation in in-band full-duplex
radios. The uplink-channel contribution was eliminated from SI-coefficient
estimation through projection onto the orthogonal complement of the pilot
subspace, after which the channel was estimated from the SI-cleaned training
observation. A normalized widely-linear memory-polynomial
dictionary was used to represent direct nonlinear and conjugate image components
of the residual SI, and its branch structure was exploited to define coefficient
groups. Two estimators that retain all candidate dictionary columns were developed: the BIC-tuned pilot-projected ridge estimator PP-RidgeBIC and the
pilot-projected group-sparse Bayesian estimator PP-GSBE. PP-RidgeBIC provides a stable, lower-complexity uniform-regularization
solution, whereas PP-GSBE provides a
more computationally demanding but more adaptive group-sparse solution with
branch-wise shrinkage learning. The framework clarifies how pilot projection,
dictionary normalization, regularization, and group-sparse Bayesian shrinkage
improve nonlinear digital SI cancellation under desired-pilot contamination,
ill-conditioned nonlinear dictionaries, and nonlinear model mismatch.

\end{document}